\documentclass{article} %
\usepackage{iclr2027_conference,times}

\usepackage{amsmath,amsfonts,bm, amsthm}
\usepackage{hyperref}
\usepackage{cleveref}
\usepackage{algorithm}
\usepackage{algpseudocode}
\usepackage{url}
\usepackage{nicefrac}
\usepackage{pgfplots}
\usepackage{enumitem}
\usepgfplotslibrary{fillbetween}
\pgfplotsset{compat=1.18}

\definecolor{c1}{HTML}{377EB8} %
\definecolor{c2}{HTML}{FF7F00} %
\definecolor{c3}{HTML}{4DAF4A} %
\definecolor{c4}{HTML}{E41A1C} %
\definecolor{c5}{HTML}{984EA3} %

\newcommand{\addlinewithband}[3]{
    \addplot [name path=upper_#2, draw=none, forget plot]
        table [x index=0, y index=3, col sep=comma] {#1};
    \addplot [name path=lower_#2, draw=none, forget plot]
        table [x index=0, y index=2, col sep=comma] {#1};
    \addplot [#2, fill=#2, fill opacity=0.15, draw=none, forget plot]
        fill between [of=upper_#2 and lower_#2];
    \addplot [#2, line width=1.1pt]
        table [x index=0, y index=1, col sep=comma] {#1};
    \addlegendentry{#3}
}

\def\eqref#1{equation~\ref{#1}}

\def\1{\bm{1}}

\DeclareMathAlphabet{\mathsfit}{\encodingdefault}{\sfdefault}{m}{sl}
\SetMathAlphabet{\mathsfit}{bold}{\encodingdefault}{\sfdefault}{bx}{n}

\DeclareMathOperator*{\argmax}{arg\,max}
\DeclareMathOperator*{\argmin}{arg\,min}

\newcommand{\real}{\mathbb{R}}
\newcommand{\prob}{\mathcal{P}}
\newcommand{\state}{\bm{s}}
\newcommand{\statespace}{\mathcal{S}}

\newcommand{\minaction}{\bm{u}}
\newcommand{\maxaction}{\bm{v}}
\newcommand{\actionspacemin}{\mathcal{U}}
\newcommand{\actionspacemax}{\mathcal{V}}

\newcommand{\dimstate}{{d_{\state}}}
\newcommand{\dimminaction}{{d_{\minaction}}}
\newcommand{\dimmaxaction}{{d_{\maxaction}}}
\newcommand{\xconc}{\bm{x}}
\newcommand{\xspace}{\mathcal{X}}

\newcommand{\transitionkernel}{P}
\newcommand{\df}{\gamma}
\newcommand{\rwdfunc}{r}
\newcommand{\rmax}{R_{\mathrm{max}}}
\newcommand{\return}{\mathcal{R}}

\newcommand{\qfunc}{Q}
\newcommand{\qmax}{\qfunc_{\mathrm{max}}}
\newcommand{\qfunck}{\qfunc_k}
\newcommand{\qfunckone}{Q_{k+1}}
\newcommand{\params}{\theta}

\newcommand{\lipschitz}{L_\mathcal{F}}

\newcommand{\valfunc}{V}

\newcommand{\coveringnumber}{N}
\newcommand{\condidens}{p}

\newcommand{\numiters}{K}
\newcommand{\samplingdist}{\sigma}

\newcommand{\oracle}{\mathcal{O}}

\newcommand{\bellman}{\mathcal{T}}
\newcommand{\approxbellman}{\tilde{\bellman}}

\newcommand{\ynoise}{\xi}

\newcommand{\error}{\varepsilon}
\newtheorem{assumption}{Assumption}
\newtheorem{theorem}{Theorem}
\newtheorem{lemma}{Lemma}
\newtheorem{corollary}{Corollary}
\newtheorem{remark}{Remark}

\newtheorem{fact}{Fact}
\crefname{assumption}{Assumption}{Assumptions}
\Crefname{assumption}{Assumption}{Assumptions}
\crefname{corollary}{Corollary}{Corollaries}
\Crefname{corollary}{Corollary}{Corollaries}
\newcommand{\bb}[1]{\mathbb{#1}}
\newcommand{\ee}{\bb{E}}
\newcommand{\variance}{\mathrm{Var}}
\newcommand{\bestfk}{\bar{f}_k}
\newcommand{\delapprox}{\delta_{\mathrm{approx}}}
\newcommand{\volume}{V}

\newcommand{\dk}{D_k}
\newcommand{\dkone}{D_{k+1}}

\newcommand{\st}[1]{\state_{(#1)}}
\newcommand{\ut}[1]{\minaction_{(#1)}}
\newcommand{\vt}[1]{\maxaction_{(#1)}}

\newcommand{\x}{\bm x}
\newcommand{\y}{\bm y}

\newcommand{\softmax}{\varphi}

\title{$\qfunc$ Can Play That Game:\\Online Fitted $\qfunc$-Iteration for Continuous-Action Zero-Sum Markov Games with Convex-Concave Function Approximation}

\author{Kushagra Gupta$^{1}$, Jingqi Li$^1$, Cade Armstrong$^1$, Lasse Peters$^2$, Ross E. Allen$^3$,\\
\textbf{Ufuk Topcu$^1$, David Fridovich-Keil$^1$}\\
$^1$ The University of Texas at Austin\\
$^2$ The University of California, Berkeley\\
$^3$ Massachusetts Institute of Technology (MIT) Lincoln Labs \\
\texttt{\{kushagrag,\,jingqi.li,\,cadearmstrong,\,utopcu,\,dfk\}@utexas.edu}\\
\texttt{lasse.peters@berkeley.edu, ross.allen@ll.mit.edu}
}

\iclrfinalcopy %
\begin{document}

\maketitle

\begin{abstract}
Zero-sum Markov games arise in a wide variety of sequential  decision-making problems such as adversarial learning and planning against modeled uncertainties. However, prior work on finite-sample guarantees on the learned state-action value function ($\qfunc$-function) for zero-sum Markov games is largely  restricted to finite-action settings, or to continuous-action games in which agents have linear dynamics and quadratic rewards (i.e. linear-quadratic, or LQ, games). A central challenge in extending such guarantees to more general continuous-action zero-sum Markov games is that the associated Bellman operator involves a minimax problem that need not admit a tractable saddle-point solution. To this end, we first introduce a class of neural network function approximators for the $\qfunc$-function that is convex-concave in the players' actions, which guarantees that this minimax problem admits a pure-strategy saddle point. We then study an online variant of fitted $\qfunc$-iteration employing this function class and establish, to the best of our knowledge, the first finite-sample guarantees for non-LQ zero-sum Markov games with continuous states and actions. Our code can be found at \url{https://github.com/CLeARoboticsLab/QCanPlayThatGame}. \\

\end{abstract}

\section{Introduction}
We consider the problem of learning state-action value functions ($\qfunc$-functions) in discounted, infinite-horizon, two-player zero-sum Markov games with \emph{continuous} state-action spaces.
Such games naturally arise in competitive scenarios and robust decision-making, where environmental uncertainties and disturbances are adversarially modeled \citep{zhou1996robust,morimoto2000robust,pinto2017robust}. Prior work has established finite-sample guarantees for zero-sum Markov games in the finite-action setting \citep{lagoudakis2002, perolat2015approximate, zhao2022provably, fan2020theoretical, zhang2021finite, jia2019feature, sidford2020solving, xie2020learning, alacaoglu2022natural,chen2024two,feng2024improving}. However, extending these results to the continuous state-action setting remains challenging. The difficulty stems from the \emph{minimax} problem on $\qfunc$ inside the Bellman operator. At every state, each player wishes to play a security strategy, meaning that one player's policy solves a $\min\max$ problem and the other's solves a $\max\min$.
The solutions to these two distinct problems are mutually consistent at \emph{saddle points}, where both players have non-exploitable security strategies. Saddle points are guaranteed to exist when policies map from state to \emph{mixed strategies} over the action space.
In finite-action spaces, the resulting saddle-point problem can be solved via a linear program; in continuous action spaces however, it is generally intractable to compute mixed-strategy saddle points as mixed-strategy spaces are infinite-dimensional.
Moreover, when policies are restricted to output \emph{pure strategies} at each state, i.e. raw actions, saddle-point solutions generally do not exist for arbitrary $\qfunc$.

In this work, we propose a new class of neural network function approximators of the $\qfunc$-function which is convex in the minimizing player's action and concave in the maximizing player's action. This convex-concave structure addresses the challenge above, as the minimax theorem guarantees that a pure-strategy saddle point exists for the minimax problem inside the Bellman operator. Further, this pure-strategy saddle point can also be reliably computed up to any desired accuracy using existing minimax optimization methods \citep{nemirovski2004,lin2020near}. We analyze an online variant of minimax fitted $\qfunc$-iteration which employs this convex-concave function class and establish finite-sample bounds between the learned $\qfunc$-function and the fixed point of the Bellman operator, assuming that the fixed point admits pure-strategy saddle points. 

The prior work closest to ours establishes finite-sample bounds for online minimax fitted $\qfunc$-iteration with neural network function approximation, but is restricted to finite-action spaces \citep{fan2020theoretical}. In the continuous-action setting, single-player reinforcement learning literature has developed finite-sample analysis for fitted $\qfunc$-iteration with function approximation \citep{antos2007fitted}. However, extending these results to the zero-sum Markov game setting remains underexplored. Within the game setting we consider, theoretical finite-sample results for continuous state-action domains remain largely confined to the linear-quadratic (LQ) setting, where agents have linear transition dynamics and quadratic objectives \citep{bacsar1998dynamic,al2007adaptive,al2007model,zhang2019policy,bu2019global,zhang2021provably}. These results exploit the fact that in the LQ setting, the optimal $\qfunc$-function and the resulting policies admit a closed-form solution via the algebraic Riccati equations \citep{bacsar1998dynamic}. Unfortunately, this Riccati structure does not extend beyond the LQ setting.
In this work we move beyond the LQ setting, and make the following contributions:
\begin{itemize}
    \item \textbf{Contribution 1.} We propose a new class of neural network function approximators for the $\qfunc$-function of zero-sum Markov games. Building on the partially input convex neural network architecture \citep{amos2017input}, this class is convex in the minimizing player's actions and concave in the maximizing player's actions by construction. The convex-concave structure ensures the existence of a saddle-point solution to the minimax problem in the Bellman operator.
    \item \textbf{Contribution 2.} We analyze batched minimax $\qfunc$-learning employing the above function approximation class, and establish finite-sample bounds comparing the learned $\qfunc$-function to the fixed point of the Bellman operator, assuming the fixed point admits pure-strategy saddle points. To the best of our knowledge, this is the first such bound for neural network function approximation in zero-sum Markov games with continuous state-action spaces beyond the LQ setting.
\end{itemize}
Additionally, we empirically investigate whether these convex-concave neural networks lead to stable training when used on a practical version of modified fitted $\qfunc$-iteration. 

\section{Preliminaries}\label{section: prelims}
\paragraph{Notation.} $\bb{Z}_{\geq0}$ denotes the set of non-negative integers, and for $n\in\bb{Z}_{\geq0}$, $[n]$ denotes the set $\{1,2,\cdots,n\}$. $\prob(\statespace)$ denotes the set of all probability measures over a measurable space $\statespace$. $M(\statespace)$ denotes the space of bounded measurable functions with domain $\statespace$. For $f\in M(\mathcal{X})$ and a probability measure $\sigma\in\prob(\mathcal{X})$, we denote $\|f\|_{2,\samplingdist}=\sqrt{\ee_{\bm x\sim \sigma}\left[f(\bm x)^2\right]}$. For a measure $\mu$ absolutely continuous with respect to another measure $\sigma$, i.e., $\mu\ll\sigma$, $\nicefrac{d\mu}{d\sigma}$ denotes the Radon-Nikodym derivative. We denote a sequence of vectors $\bm v_1,\dots, \bm v_j$ in short as $\bm v_{1:j}$. We denote the Lebesgue measure by $\lambda$, and we denote the Lebesgue measure of an $\ell^2$-ball in $d$ dimensions with radius $\eta$ by $\alpha_d\eta^d$.

We consider two-player, zero-sum, infinite-horizon discounted Markov games given by the tuple $\left(\statespace, \actionspacemin, \actionspacemax, \transitionkernel,\rwdfunc,\df\right)$, where $\statespace\subset\real^\dimstate$ is the continuous state space, and $\actionspacemin\subset\real^\dimminaction$ and $\actionspacemax\subset\real^\dimmaxaction$ are the continuous action spaces of the minimizing and maximizing players, respectively. We take $\statespace,\actionspacemin,\actionspacemax$ to be convex and compact. $\transitionkernel:(\statespace\times\actionspacemin\times\actionspacemax)\to\prob\left(\statespace\right)$ denotes the transition dynamics of the game, $\rwdfunc:(\statespace\times\actionspacemin\times\actionspacemax)\to\real$ is the immediate reward function, and $\df\in(0,1)$ is the discounting factor. Let player 1 and player 2 be the minimizing and maximizing players respectively. At each time $t\in\mathbb{Z}_{\geq0}$, at some game state $\state_t\in\statespace$, players 1 and 2 choose actions $\minaction_t\in\actionspacemin$ and $\maxaction_t\in\actionspacemax$. This yields an immediate reward of $\rwdfunc(\state_t,\minaction_t,\maxaction_t)$ and the game proceeds to the next state $\state_{t+1}\sim\transitionkernel(\cdot\vert\state_t,\minaction_t,\maxaction_t)$. Player 1 chooses actions to minimize the expected discounted cumulative return $\return=\ee\left[\sum_{t=0}^\infty\df^t\rwdfunc(\state_t,\minaction_t,\maxaction_t)\right]$, while player 2 chooses actions to maximize $\return$. 

Each player chooses its actions according to a policy. We consider deterministic Markovian policies induced by a state-action value function ($\qfunc$-function), in which players are restricted to pure strategies. For a given continuous function $\qfunc:(\statespace\times\actionspacemin\times\actionspacemax)\to\real$, at state $\state$, player 1 chooses $\pi_\qfunc(\state)\in\argmin_{\minaction\in\actionspacemin}\max_{\maxaction\in\actionspacemax}\qfunc(\state,\minaction,\maxaction)$ and player 2 chooses $\mu_\qfunc(\state)\in\argmax_{\maxaction\in\actionspacemax}\min_{\minaction\in\actionspacemin}\qfunc(\state,\minaction,\maxaction)$. These are the players' security strategies: player 1 is guaranteed a value no worse than $\min_{\minaction\in\actionspacemin}\max_{\maxaction\in\actionspacemax}\qfunc(\state,\minaction,\maxaction)$ and player 2 is guaranteed a value no worse than $\max_{\maxaction\in\actionspacemax}\min_{\minaction\in\actionspacemin}\qfunc(\state,\minaction,\maxaction)$. For such games the \emph{minimax Bellman operator} $\bellman:M\left(\statespace\times\actionspacemin\times\actionspacemax\right)\to M\left(\statespace\times\actionspacemin\times\actionspacemax\right)$ is defined by
\begin{align}\label{eq: bellman operator}  \bellman\qfunc\left(\state,\minaction,\maxaction\right)=\rwdfunc\left(\state,\minaction,\maxaction\right) + \df \ee_{\state'\sim\transitionkernel(\cdot\vert\state,\minaction,\maxaction)}\left[\inf_{\tilde{\minaction}\in\actionspacemin}\sup_{\tilde{\maxaction}\in\actionspacemax}\qfunc\left(\state',\tilde{\minaction},\tilde{\maxaction}\right)\right].
\end{align}

It is well known that the minimax Bellman operator is contractive in the $L_\infty$-norm and thus a unique fixed point $\qfunc^*$ exists such that $\bellman\qfunc^*=\qfunc^*$ \citep{shapley1953stochastic,perolat2015approximate}. \Cref{eq: bellman operator} reveals that in order to compute the Bellman operator for continuous functions, one needs to solve a \emph{minimax} problem, in this case corresponding to player 1's security strategy $\pi_\qfunc$. One can analogously define the \emph{maximin} Bellman operator associated with player 2's security strategy $\mu_\qfunc$. While both minimax and maximin problems admit solutions for compact $\actionspacemin$ and $\actionspacemax$ and continuous $\qfunc$, these solutions need not coincide, i.e., $(\pi_\qfunc,\mu_\qfunc)$ need not be a saddle-point pair for generic $\qfunc$. We restrict $\qfunc$ to a convex-concave function approximation class, convex in the player 1's action and concave in player 2's action so that the minimax theorem guarantees that at every state, $(\pi_\qfunc,\mu_\qfunc)$ forms a saddle point for every $\qfunc$ in our function class \citep{sion1958on}. Further, these saddle points can be found up to any desired accuracy using existing minimax optimization methods \citep{nemirovski2004,lin2020near}.

\begin{remark}[Mixed Strategies]
    An alternative route to ensure existence of a saddle-point solution is to let players employ mixed strategies, where they randomize over pure strategies. In the finite-action setting, mixing ensures that a saddle point of the minimax problem always exists, and it can be computed by solving a linear program at every state $\state$. In the continuous-action setting, a saddle point over mixed strategies still exists under mild continuity assumptions \citep{glicksberg1952further}, but computing it is far less straightforward than in the finite-action setting. The iterative double oracle algorithm \citep{adam2021double} is guaranteed to return a saddle point up to any desired accuracy; however, each iteration requires computing a global best response for each player, which can be intractable under function approximation. Moreover, an explicit bound on the required number of iterations as a function of the saddle-point accuracy remains unknown. These difficulties motivate our focus on pure strategies with convex-concave function approximation instead.
\end{remark}
To represent the return $\return$ obtained by policies $(\pi,\mu)$, it is useful to define the \emph{state} value function $\valfunc^{\pi,\mu}:\statespace\to\real$, given by $\valfunc^{\pi,\mu}(\state) = \ee\big[\sum_{t=0}^\infty \df^t \rwdfunc\big(\state_t,\pi(\state_t),\mu(\state_t)\big)|\state_0=\state\big]$.
The state value function $\valfunc^*$ relates to the fixed point $\qfunc^*$ of~\cref{eq: bellman operator} as $\qfunc^*(\state,\minaction,\maxaction)=\rwdfunc(\state,\minaction,\maxaction)+\df\ee_{\state'\sim\transitionkernel(\cdot|\state,\minaction,\maxaction)}\left[\valfunc^*(\state')\right]$. The state value function can be used to determine whether the policies $\pi,\mu$ form a saddle-point equilibrium via the \emph{Nash gap}. Let $\Pi$ denote the set of pure Markovian strategies, then for an evaluation distribution $\rho\in\prob(\statespace)$, the Nash gap is defined by
\begin{align}
    \texttt{NashGap}_{\rho}(\pi,\mu) = \ee_{\state\sim\rho}\left[\max_{\mu'\in\Pi}\valfunc^{\pi,\mu'}(\state) - \min_{\pi'\in\Pi}\valfunc^{\pi',\mu}(\state) \right].
\end{align}
The Nash gap is always non-negative, and is zero only when the input policies form a saddle-point Nash equilibrium with respect to $\rho$. Formally, $\texttt{NashGap}_\rho(\pi,\mu)=0\iff\valfunc^{\pi,\mu'}(\state)\le\valfunc^{\pi,\mu}(\state)\le\valfunc^{\pi',\mu}(\state)$ for all $\pi'\in\Pi$, $\mu'\in\Pi$, and $\rho$-almost every $\state\in\statespace$. In particular, when the fixed point $\qfunc^*$ of the minimax Bellman operator \labelcref{eq: bellman operator} admits pure-strategy saddle points, the induced policies $(\pi_{\qfunc^*},\mu_{\qfunc^*})$ satisfy $\texttt{NashGap}_\rho(\pi_{\qfunc^*},\mu_{\qfunc^*})=0$ for every evaluation distribution $\rho\in\prob(\statespace)$. 
The goal of this paper is to obtain bounds on the $\qfunc$-function learned from a finite number of samples, and in turn bound the Nash gap corresponding to its induced security strategies.

\section{Convex-Concave Function Approximation}\label{section: convex concave fa}
We now describe the convex-concave neural network architecture used to approximate the $\qfunc$-function: for every network parameter vector $\params\in\Theta$, and every state $\state\in\statespace$, the $\qfunc$-function network $\qfunc_\params:(\statespace\times\actionspacemin\times\actionspacemax)\to\real$ should be convex-concave, i.e., $\qfunc_\params(\state,\minaction,\maxaction)$ must be convex in $\minaction\in\actionspacemin$ for every fixed $(\state,\maxaction)$ and concave in $\maxaction\in\actionspacemax$ for every fixed $(\state,\minaction)$. 
To construct such a network, we build on \emph{partially input convex neural networks} (PICNNs) \citep{amos2017input}, a network architecture that guarantees convexity in a designated subset of the network's inputs, while simultaneously leaving the dependence on the other inputs unconstrained. Formally, given two inputs $\bm x$ and $\bm y$ where convexity in $\bm y$ is desired, the PICNN architecture constructs a neural network $N_\params(\bm x,\bm y)$, such that for every choice of parameters $\params$, $N_\params(\bm x,\bm y)$ is convex in $\bm y$ for fixed $\bm x$, while retaining full expressivity in $\bm x$. PICNNs achieve this by constraining a subset of weights along the computation graph edges traceable back to $\bm y$ to be non-negative and pairing them with convex, non-decreasing activations, while leaving weights traceable to $\bm x$ unconstrained. We review the PICNN architecture in more detail in \Cref{appendix: convex concave fa piccnn}.

To develop a \emph{convex-concave} $\qfunc_\params(\state,\minaction,\maxaction)$, we use three neural networks with parameters $\params^1,\params^2, \params^3$, representing the concatenated parameters as $\params=(\params^1;\params^2;\params^3)\in\Theta$. We use two PICNNs $N^1_{\params^1}(\state,\minaction):(\statespace\times\actionspacemin)\to\real$ (convex in the input $\minaction$) and $N^2_{\params^2}(\state,\maxaction):(\statespace\times\actionspacemax)\to\real$ (convex in the input $\maxaction$), as well as a standard feedforward matrix-valued neural network $M_{\params^3}(\state):\statespace\to\real^{\dimminaction\times\dimmaxaction}$. We construct a convex-concave $\qfunc_\params(\state,\minaction,\maxaction)$ as 
\begin{align}\label{eq: convex concave q}
\qfunc_\params(\state,\minaction,\maxaction) := N^1_{\params^1}(\state,\minaction) - N^2_{\params^2}(\state,\maxaction) + \minaction^\top M_{\params^3}(\state)\, \maxaction.
\end{align}
Note that while the first two terms $N^1_{\params^1}(\state,\minaction) - N^2_{\params^2}(\state,\maxaction)$ alone would be convex in $\minaction$ and concave in $\maxaction$, they would make $\qfunc_\params$ only additively separable in $\minaction$ and $\maxaction$. The third, bilinear term $\minaction^\top M_{\params^3}(\state)\, \maxaction$ captures interactions between the players' actions beyond the separable structure, while being simultaneously convex in $\minaction$ and concave in $\maxaction$. Note that it may be possible to include coupling terms that are more expressive than a bilinear mapping. We do not pursue this direction here, as the bilinear term suffices to guarantee joint convexity-concavity for our theoretical result in \Cref{section: online minimax fqi}, and we found it to be adequate for our experiments in \Cref{section: experiments} as well. Our theoretical results will need the the Lipschitz constant of $\qfunc_{\params}$ in the joint actions, as well as the covering number. We bound these for a particular architectural choice of $N^i$ and $M$ in \Cref{theorem: lipschitz action bound for F} of \Cref{subsection: appendix lipschitz actions bounds F}, and \Cref{theorem: covering number bound} of \Cref{appendix: convering number of convex-concave q function} respectively.

\section{Online Minimax Fitted $\qfunc$ Iteration}\label{section: online minimax fqi}
We now describe our modified minimax fitted $\qfunc$-iteration procedure, given in \Cref{algo: fqi}. We restrict the class of function approximators class to be the convex-concave neural network architecture described in \Cref{section: convex concave fa}, which we denote as $\mathcal{F}$. Further, we assume access to a minimax oracle $\oracle_\epsilon:\mathcal{F}\to(\actionspacemin\times\actionspacemax)$. For a convex-concave objective $f(\minaction,\maxaction)$ and any desired $\epsilon>0$, $\oracle_\epsilon$ returns an approximate saddle point $(\tilde{\minaction},\tilde{\maxaction})\in(\actionspacemin\times\actionspacemax)$ such that the gap $\left\vert \max_{\maxaction\in\actionspacemax}f\left(\tilde{\minaction},\maxaction\right) - \min_{\minaction\in\actionspacemin}f(\minaction,\tilde{\maxaction})\right\vert<\epsilon$. Several existing minimax optimization methods can be feasibly used as the oracle for $\qfunc\in\mathcal{F}$, because $\mathcal{F}$ guarantees convexity-concavity. At the $k^{\text{th}}$ iteration of \Cref{algo: fqi}, when the parameter estimate for the $\qfunc$-function is $\params_k$, we denote the corresponding network $\qfunc_{\params_k}$ as $\qfunck$. 
\begin{algorithm}
\caption{Online Minimax Fitted $\qfunc$-Iteration}
\begin{algorithmic}[1]
  \State \textbf{Hyper-parameters:} Number of iterations $\numiters$, batch size $n$.
  \State \textbf{Inputs:} Markov game $\left(\statespace, \actionspacemin, \actionspacemax, \transitionkernel,\rwdfunc,\df\right)$, initial $\qfunc$-function $\qfunc_0\in\mathcal{F}$, fixed sampling distribution $\samplingdist\in\prob(\statespace\times\actionspacemin\times\actionspacemax)$, minimax oracle $\oracle_\epsilon$.

  \For{$k = 0$ to $\numiters - 1$}

    \State Sample $n$ i.i.d. samples $\left\{\left(\state_i,\minaction_i,\maxaction_i\right)\right\}_{i=1}^n \sim \samplingdist$
    \State Draw $\state_i'\sim \transitionkernel\left(\cdot\mid \state_i,\minaction_i,\maxaction_i\right)$ for each $i\in[n]$
    \State For each $i\in[n]$, form the target
    \begin{gather*}
\left(\tilde{\minaction}_i,\tilde{\maxaction}_i\right)=\oracle_\epsilon\big(\qfunck\left(\state'_i,\cdot,\cdot\right)\big)\\
        y_i = \rwdfunc\left(\state_i,\minaction_i,\maxaction_i\right) + \df\qfunc_k\left(\state'_i,\tilde{\minaction}_i,\tilde{\maxaction}_i\right)
    \end{gather*}
    \State Solve the regression problem exactly
\begin{align*}
\widehat L_n(f;\qfunck)&:=\frac{1}{n}\sum_{i=1}^n\big(f(\state_i,\minaction_i,\maxaction_i)-y_i\big)^2,\\
\qquad\qfunckone &:=\arg\min_{f\in\mathcal F}\ \widehat L_n(f;\qfunck) 
\end{align*}
  \EndFor
  \State \textbf{Output:} Learned $\qfunc$-function $\qfunc_\numiters$
\end{algorithmic}
\label{algo: fqi}
\end{algorithm}

\Cref{algo: fqi} differs from standard fitted $\qfunc$-iteration in the sense that we use data in an \emph{online} manner and do not reuse it. Specifically, at iteration $k$, using a fixed sampling distribution $\samplingdist\in\prob(\statespace\times\actionspacemin\times\actionspacemax)$, we sample $n$ i.i.d. state-action tuples $(\state_i,\minaction_i,\maxaction_i)\sim\samplingdist$ for $i\in[n]$. Using these  states and actions, we sample the next states according to the transition dynamics $\state'_i\sim\transitionkernel(~\cdot~\vert\state_i,\minaction_i,\maxaction_i)$ for $i\in[n]$. For each $i\in[n]$, we invoke the minimax oracle $\oracle_\epsilon$ on the fitted $\qfunc$-function $\qfunck\in\mathcal{F}$ evaluated at the next state $\state_i'$ to obtain an approximate saddle-point action pair $(\tilde{\minaction}_i,\tilde{\maxaction}_i) := \oracle_\epsilon\big(\qfunck(\state_i',\cdot,\cdot)\big)$. Since $\qfunck$ is convex-concave from \Cref{eq: convex concave q}, the minimax theorem guarantees that a saddle point of $\qfunck(\state_i',\cdot,\cdot)$ exists, and $\oracle_\epsilon$ returns an $\epsilon$-accurate approximation to it. We use $(\tilde{\minaction}_i,\tilde{\maxaction}_i)$ to form the regression target $y_i = \rwdfunc(\state_i,\minaction_i,\maxaction_i) + \df\qfunck(\state_i',\tilde{\minaction}_i,\tilde{\maxaction}_i)$, which serves as an empirical estimate of an \emph{approximate} minimax Bellman operator from \Cref{eq: bellman operator} applied to the sampled tuple $(\state_i,\minaction_i,\maxaction_i)$. This approximation occurs because in practice, one can solve the $\min\max$ problem up to $\epsilon$-accuracy instead of exactly. Finally, we update the $\qfunc$-function estimate by solving a least-squares regression problem over $\mathcal{F}$:
\begin{equation}\label{eq: fqi regression}
    \qfunckone := \arg\min_{f\in\mathcal{F}} \widehat{L}_n(f;\qfunck) := \arg\min_{f\in\mathcal F}\frac{1}{n}\sum_{i=1}^n\left(f(\state_i,\minaction_i,\maxaction_i)-y_i\right)^2.
\end{equation}
Note that in \Cref{eq: fqi regression}, the regression target~$y_i$ is treated as a constant independent of $\mathcal{F}$. Repeating this procedure for $\numiters$ iterations yields the final estimate $\qfunc_{\numiters}$ returned by \Cref{algo: fqi}.

\paragraph{Sources of error in \Cref{algo: fqi}.} We now shift our focus to analyzing how the learned $\qfunc_\numiters$ compares to the fixed point of the minimax Bellman operator $\qfunc^*$. It is pertinent to note that \Cref{algo: fqi} departs from an exact Bellman iteration $\qfunc_{k+1} = \bellman\qfunc_k$ in three ways, which introduce three corresponding sources of error: (i) an optimization error from only being able to solve the minimax problem up to $\epsilon$-accuracy with $\oracle_\epsilon$, (ii) an approximation error from restricting to the convex-concave function class $\mathcal{F}$, and (iii) a statistical error from using finitely many samples per iteration. We now describe each in detail. First, because we solve the minimax problem in \Cref{eq: bellman operator} with $\epsilon$-accuracy and not exactly, \Cref{algo: fqi} is better represented by the \emph{approximate} minimax Bellman operator $\approxbellman:M\left(\statespace\times\actionspacemin\times\actionspacemax\right)\to M\left(\statespace\times\actionspacemin\times\actionspacemax\right)$ defined by
\begin{align}\label{eq: approx bellman operator}
    \approxbellman\qfunc(\state,\minaction,
    \maxaction)= \rwdfunc(\state,\minaction,
    \maxaction) + \df\ee_{\state'\sim\transitionkernel\left(\cdot\vert\state,\minaction,\maxaction\right)}\left[\qfunc\left(\state', \oracle_\epsilon\big(\qfunc\left(\state',\cdot,\cdot\right)\right)\big)\right],
\end{align}
where use a slight abuse of notation by denoting $\qfunc(\state',\tilde{\minaction}, \tilde{\maxaction})$ where $(\tilde{\minaction}, \tilde{\maxaction})=\oracle_\epsilon\big(\qfunc(\state',\cdot,\cdot)\big)$ as $\qfunc\big(\state', \oracle_\epsilon(\qfunc(\state',\cdot,\cdot))\big)$.
Secondly, without additional assumptions on the Markov game, the approximate Bellman operator need not satisfy \emph{Bellman completeness}, i.e., it is possible that for $\qfunc\in\mathcal{F}$, we have $\approxbellman\qfunc\notin\mathcal{F}$. This motivates the fundamental approximation error
\begin{align}\label{eq: delapprox}
    \delapprox := \sup_{\qfunc\in\mathcal{F}}\inf_{f\in\mathcal{F}}\left\|f-\approxbellman\qfunc\right\|_{2,\samplingdist}.
\end{align}
The third error is statistical: because we only have $n$ samples at each iteration, we can only perform an empirical approximate Bellman iteration. We will show how these affect the bound on $\|\qfunc_\numiters-\qfunc^*\|_{2,\rho}$ for a distribution $\rho\in\prob(\statespace\times\actionspacemin\times\actionspacemax)$, where $\qfunc^*$ is the fixed point of the \emph{exact} Bellman operator $\bellman$ given in \Cref{eq: bellman operator}. We now list our assumptions that hold throughout our analysis.

\begin{assumption}[State-Action Spaces]\label{assumption: actions}
    The continuous state and action spaces $\statespace, \actionspacemin$ and $\actionspacemax$ are compact and convex, with $D_\statespace := \sup_{\state\in\statespace}\|\state\|_2<\infty$, $D_\actionspacemin := \sup_{\minaction\in\actionspacemin}\|\minaction\|_2<\infty$ and $D_\actionspacemax:= \sup_{\maxaction\in\actionspacemax}\|\maxaction\|_2
    <\infty$.
\end{assumption}
\begin{assumption}[Reward Function]\label{assumption: reward}
    The immediate reward function $\rwdfunc$ is continuous and $\|\rwdfunc\|_\infty\leq \rmax<\infty$. Further, $\rwdfunc$ is $L_\rwdfunc$-Lipschitz in its actions, i.e., there exists $L_\rwdfunc>0$ such that for all 
    $\state\in\statespace$, $\minaction,\minaction'\in\actionspacemin$, and $\maxaction,\maxaction'\in\actionspacemax$,
    \begin{align*}
        |\rwdfunc(\state,\minaction,\maxaction)-\rwdfunc(\state,\minaction',\maxaction')|\leq L_\rwdfunc\left(\|\minaction-\minaction'\|_2+\|\maxaction-\maxaction'\|_2\right).
    \end{align*}
\end{assumption}
\begin{assumption}[Transition Dynamics]\label{assumption: dyn}
    For every bounded continuous function $h:\mathcal S\to\mathbb R$, the map $(\state,\minaction,\maxaction)\mapsto \mathbb E_{s'\sim P(\cdot\mid \state,\minaction,\maxaction)}[h(\state')]$ is continuous. Further, $\transitionkernel$ is also $L_\transitionkernel$-Lipschitz in its actions, i.e.,, there exists $L_\transitionkernel>0$ such that for all 
    $\state\in\statespace$, $\minaction,\minaction'\in\actionspacemin$, $\maxaction,\maxaction'\in\actionspacemax$, and measurable sets $S$ of $\statespace$,
    \begin{align*}
        P(S|\state,\minaction,\maxaction)-P(S|\state,\minaction',\maxaction') \leq L_\transitionkernel\left(\|\minaction-\minaction'\|_2+\|\maxaction-\maxaction'\|_2\right).
    \end{align*}
\end{assumption}
\Cref{assumption: actions,assumption: reward,assumption: dyn} are standard in the literature and ensure that the Bellman operator $\bellman$ preserves continuity, admits a fixed point, and is Lipschitz continuous.
\begin{assumption}[Exploratory Distribution]\label{assumption: exploratory dist}
    There exists $c>0$ such that exploratory sampling distribution $\samplingdist$ has a conditional density $\condidens(\minaction,\maxaction|\state)$ satisfying $\condidens(\minaction,\maxaction|\state)\geq c>0$ for all $\state\in\statespace$, $\minaction\in\actionspacemin$, $\maxaction\in\actionspacemax$.
\end{assumption}

\begin{assumption}[Action Spaces Regularity]\label{assumption: action space regularity}
    Consider a ball $\mathcal{B}(\bm a, \eta)$ in the joint action space $\actionspacemin\times\actionspacemax\subset\real^{\dimminaction+\dimmaxaction}$ centered at $\bm a\in\actionspacemin\times\actionspacemax$ with radius $\eta>0$. We assume there exists $\alpha>0$ such that
    \begin{align*}
        \lambda\big(\mathcal{B}(\bm a, \eta)\cap (\actionspacemin\times\actionspacemax)\big) \geq \min\left(\alpha\cdot\alpha_{\dimminaction+\dimmaxaction}\eta^{\dimminaction+\dimmaxaction}, \lambda\left(\actionspacemin\times\actionspacemax\right)\right).
    \end{align*}
\end{assumption}
\Cref{assumption: exploratory dist} ensures that the action sampling distribution assigns positive mass to all regions of the players' action spaces, and is satisfied by standard distributions used in reinforcement learning such as Gaussian and uniform distributions. \Cref{assumption: action space regularity} is a mild geometric condition on the players' joint action spaces ruling out pathological geometries near the boundary, and is satisfied by convex action spaces.
\begin{assumption}[Convex-Concave Function Approximation]\label{assumption: FA}
    For the class of function approximators $\mathcal{F}$, every $\qfunc\in\mathcal F$ satisfies:
    \begin{itemize}
        \item Boundedness: $\|\qfunc\|_\infty \leq \qmax$, with $\qmax\geq \nicefrac{\rmax}{1-\df}$.
        \item Convex-concavity: $\qfunc(\state,\cdot,\cdot)$ is convex in $\minaction$ for fixed $\maxaction$ and concave in $\maxaction$ for fixed $\minaction$, for every $\state$.
        \item Joint Lipschitzness: There exists a positive constant $\lipschitz<\infty$ such that for all $\qfunc\in\mathcal{F}$,  $|\qfunc(\state,\minaction,\maxaction) - \qfunc(\state,\minaction',\maxaction')| \leq \lipschitz\big(\|\minaction-\minaction'\|_2 + \|\maxaction-\maxaction'\|_2\big)~\forall~\state$ and all $\minaction,\minaction'\in\actionspacemin$, $\maxaction,\maxaction'\in\actionspacemax$.
    \end{itemize}
\end{assumption}
Convexity-concavity in \Cref{assumption: FA} is satisfied by the neural network architecture we introduce in \Cref{section: convex concave fa} through construction. The Lipschitzness can be ensured by standard machine learning practices such as spectral normalization with bounded inputs \citep{miyato2018spectral}. Note that the same practices also ensure boundedness. We provide a Lipschitz bound for $\mathcal{F}$ in \Cref{theorem: lipschitz action bound for F} of \Cref{subsection: appendix lipschitz actions bounds F}.
\begin{assumption}[Minimax Oracle]\label{assumption: minimax oracle}
    We have access to an oracle $\oracle_\epsilon:\statespace\times\mathcal{F}\to\actionspacemin\times\actionspacemax$, such that for any desired $\epsilon>0$, for $(\pi_\qfunc,\mu_\qfunc):=\oracle_\epsilon(\state,\qfunc)$ we have
    \begin{equation*}
        \left\vert \max_{\maxaction\in\actionspacemax}\qfunc\left(\state,\pi_\qfunc,\maxaction\right) - \min_{\minaction\in\actionspacemin}\qfunc\left(\state,\minaction,\mu_\qfunc\right)\right\vert < \epsilon~\forall~\state\in\statespace,\qfunc\in\mathcal{F}.
    \end{equation*}
\end{assumption}
\Cref{assumption: minimax oracle} is satisfied for convex-concave, Lipschitz $\qfunc$-functions by a variety of minimax optimization methods introduced in prior literature \citep{nemirovski2004,lin2020near}.
\begin{assumption}[Discounted Concentrability Coefficients]\label{assumption: discounted conc coeff}
For the sampling distribution $\samplingdist\in\prob(\statespace\times\actionspacemin\times\actionspacemax)$ with state marginal 
    $\samplingdist_{\statespace}$, let $\pi^j$ 
    denote any Markov policy with $\pi^j : \statespace \to 
    \actionspacemin \times \actionspacemax$. Define the $j^{\mathrm{th}}$-step 
    concentrability coefficient for a distribution $\nu\ll\samplingdist_\statespace$
    \begin{align*}
        c_j(\nu) := \sup_{\pi^1,\dots,\pi^j}
        \left\|\frac{d(\nu P^{\pi^1}\cdots P^{\pi^j})}
        {d\samplingdist_{\statespace}}\right\|_{\infty}, \quad c_0(\nu)=\left\|\frac{d\nu}{d\samplingdist_\statespace}\right\|_\infty,\quad C_{\nu} := (1-\df)\sum_{j=0}^{\infty} \df^{j} c_j(\nu),
    \end{align*}
   with $P^{\pi^j}(\cdot|\state):=P(\cdot|\state,\pi^j(\state))$. Then, we assume that $C_{\sigma_\statespace}<\infty$.
\end{assumption}
\Cref{assumption: discounted conc coeff} quantifies the similarity between the sampling distribution $\sigma$ and the future state distributions of the Markov game, and analogous versions of \Cref{assumption: discounted conc coeff} are standard in reinforcement learning literature for both Markov decision processes and zero-sum Markov games \citep{antos2007fitted,munos2008finite,scherrer2015approximate,lazaric2016analysis,fan2020theoretical}.

To analyze the convergence of \Cref{algo: fqi}, we assume that $\qfunc^*$ from \Cref{section: prelims} admits a pure-strategy saddle point, so that \Cref{algo: fqi} converges towards a meaningful equilibrium. We formally state this as follows.
\begin{assumption}[Saddle-point Existence]\label{assumption: saddle point exists}
    For every $s\in\statespace$, $\min_{\minaction\in\actionspacemin}\max_{\maxaction\in\actionspacemax}\qfunc^*(\state,\minaction,\maxaction)=\max_{\maxaction\in\actionspacemax}\min_{\minaction\in\actionspacemin}\qfunc^*(\state,\minaction,\maxaction)$.
\end{assumption} 

We are now ready to state our main result. For the sampling distribution $\samplingdist\in\prob(\statespace\times\actionspacemin\times\actionspacemax)$, $\samplingdist_\statespace$ denotes the marginal of $\samplingdist$ over $\statespace$.
\begin{theorem}\label{theorem: fqi bounds}
    Under \Cref{assumption: actions,assumption: reward,assumption: dyn,assumption: exploratory dist,assumption: action space regularity,assumption: FA,assumption: minimax oracle,assumption: discounted conc coeff,assumption: saddle point exists}, let an evaluation distribution $\rho\in\prob(\statespace)$ satisfy $c_0(\rho)=\|\nicefrac{d\rho}{d\samplingdist_\statespace}\|_\infty\leq C_{\rho,\samplingdist}<\infty$. Then for all $\delta>0$, the output $\qfunc_\numiters$ of \Cref{algo: fqi} after $\numiters$ iterations satisfies, with probability at least $1-\delta$
    \begin{align*}
   &\ee_{\state\sim\rho}\left[\sup_{\minaction\in\actionspacemin,\maxaction\in\actionspacemax}\left|\qfunc_\numiters-\qfunc^*\right|\right] \leq 2\qmax C_{\rho,\samplingdist}\cdot\df^\numiters + \frac{\sqrt{6}C_{\samplingdist_{\statespace}} C_{\rho,\samplingdist}\left(L_\mathcal{F}+L_\rwdfunc + 2\df\qmax L_\transitionkernel\right)\eta_2}{1-\df}\\
   &+\frac{2C_{\samplingdist_{\statespace}} C_{\rho,\samplingdist}}{(1-\df)\sqrt{c\volume\left(\nicefrac{\eta_2}{2}\right)}} \left(\gamma\epsilon + \sqrt{\delapprox^2+ 5\Delta_n\left(\eta_1,\frac{\delta}{3\numiters}\right) + 4\qmax^2\sqrt{\frac{\log\left(\nicefrac{3\numiters}{\delta}\right)}{2n}} + 6\qmax\eta_1}\right),
    \end{align*}
where 
\begin{align*}
\Delta_n\left(\eta_1,\nicefrac{\delta}{3\numiters}\right):=4\qmax^2\sqrt{\frac{2\log\left(\nicefrac{6K\coveringnumber(\eta_1)}{\delta}\right)}{n}} + \frac{8\qmax^2\log\left(\nicefrac{6\numiters\coveringnumber(\eta_1)}{\delta}\right)}{3n} + 2\qmax\eta_1,
\end{align*}
$V(\nicefrac{\eta_2}{2})=\inf_{\bm a\in\actionspacemin\times\actionspacemax}\lambda\Big(\mathcal{B}(\bm a, \nicefrac{\eta_2}{2})\cap (\actionspacemin\times\actionspacemax)\Big)\geq \min\left(\alpha\alpha_{\dimminaction+\dimmaxaction}\left(\nicefrac{\eta_2}{2}\right)^{\dimminaction+\dimmaxaction}, \lambda\left(\actionspacemin\times\actionspacemax\right)\right)$ from \Cref{assumption: action space regularity}, $N(\eta_1):=N(\mathcal{F},\eta_1,\|\cdot\|_\infty)$ is the covering number for the class of functions $\mathcal{F}$, and $\eta_1,\eta_2>0$ are chosen positive constants, and $\delapprox$ is defined in \Cref{eq: delapprox}. 
\end{theorem}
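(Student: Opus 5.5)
The plan is to follow the classical error-propagation template of fitted $\qfunc$-iteration (as in \citet{munos2008finite,fan2020theoretical}), adapted in three ways. First, we propagate errors in $L_2(\samplingdist_\statespace)$ of the \emph{state-wise supremum} over actions rather than in $L_2(\samplingdist)$. Second, we account for the oracle error $\epsilon$. Third, we convert $L_2(\samplingdist)$ regression errors into state-wise sup-norm errors using Lipschitzness in actions together with \Cref{assumption: exploratory dist,assumption: action space regularity}.

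\textbf{Step 1 (per-iteration regression error).} Define $\varepsilon_k := \qfunc_{k+1} - \approxbellman\qfunc_k$. Conditional on $\qfunc_k$, the $n$ fresh samples are i.i.d., and $y_i$ is an unbiased estimate of $\approxbellman\qfunc_k$ at $(\state_i,\minaction_i,\maxaction_i)$. A standard least-squares excess-risk argument then applies. We cover $\mathcal F$ at scale $\eta_1$ in $\|\cdot\|_\infty$. For each cover element we apply Bernstein's inequality to the centered loss differences, whose variance is bounded by $4\qmax^2$ times the mean. We also apply Hoeffding's inequality for the fixed comparator $\bar f_k\in\argmin_f\|f-\approxbellman\qfunc_k\|_{2,\samplingdist}$. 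Union bounds over the cover and over the $\numiters$ iterations, with confidence $\delta/(3\numiters)$, give
\[
\|\varepsilon_k\|_{2,\samplingdist}^2 \le \delapprox^2 + 5\Delta_n + 4\qmax^2\sqrt{\log(3\numiters/\delta)/(2n)} + 6\qmax\eta_1 .
\]

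\textbf{Step 2 (from $L_2(\samplingdist)$ to sup over actions).} For fixed $\state$, let $g(\state):=\sup_{\minaction,\maxaction}|\qfunc_{k+1}-\bellman\qfunc_k|$. First compare $\approxbellman$ and $\bellman$: by \Cref{assumption: minimax oracle} and the sandwich $\min_{\minaction}\qfunc(\state,\minaction,\tilde\maxaction)\le \qfunc(\state,\tilde\minaction,\tilde\maxaction)\le\max_{\maxaction}\qfunc(\state,\tilde\minaction,\maxaction)$, the two operators differ by at most $\df\epsilon$ pointwise. Next, the function $h=\qfunc_{k+1}-\bellman\qfunc_k$ is Lipschitz in actions with constant $L_{\mathcal F}+L_\rwdfunc+2\df\qmax L_\transitionkernel$, using \Cref{assumption: reward,assumption: dyn,assumption: FA}. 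So if $|h|$ exceeds a threshold $t+L\eta_2$ at some action, then $|h|\ge t$ on a ball of radius of order $\eta_2$ around it. By \Cref{assumption: action space regularity} this ball has Lebesgue measure at least $V(\eta_2/2)$, and by \Cref{assumption: exploratory dist} it has conditional $\samplingdist$-mass at least $cV(\eta_2/2)$. Chebyshev's inequality on the conditional second moment then yields
\[
g(\state)\lesssim L\eta_2 + \frac{\|h(\state,\cdot)\|_{L_2(p(\cdot|\state))}}{\sqrt{cV(\eta_2/2)}} .
\]
Taking the $L_2(\samplingdist_\statespace)$ norm and using the triangle inequality with the $\df\epsilon$ term gives a bound on $\|g\|_{2,\samplingdist_\statespace}$ with exactly the structure of the second line of the theorem. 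The $\sqrt6$ and the factor $2$ come from the chosen constants in splitting the radius $\eta_2$.

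\textbf{Step 3 (propagation).} Write $\qfunc_{K}-\qfunc^* = \sum_k (\text{stage errors}) + (\bellman\qfunc_{k}-\bellman\qfunc^*)$ terms. The minimax value map satisfies $|\min\max \qfunc - \min\max \qfunc^*|(\state') \le \sup_{\minaction,\maxaction}|\qfunc-\qfunc^*|(\state')$. Hence $\sup_{\minaction,\maxaction}|\bellman\qfunc_k-\qfunc^*|(\state)\le \df \sup_{\minaction,\maxaction}P^{\pi}[\sup|\qfunc_k-\qfunc^*|](\state)$, where the inner supremum over actions is realized by some Markov policy $\pi$ (via measurable selection). Here \Cref{assumption: saddle point exists} is used so that $\qfunc^*$'s value equals its minimax value. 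Unrolling over $k$ gives
\[
\sup|\qfunc_\numiters-\qfunc^*| \le \sum_{k}\df^{\numiters-1-k}P^{\pi^1}\cdots P^{\pi^{\numiters-1-k}} g_k + \df^\numiters P\cdots P\,(2\qmax).
\]
Taking the $\rho$-expectation, changing measure with $c_0(\rho)\le C_{\rho,\samplingdist}$ and $c_j(\samplingdist_\statespace)$, and summing with Jensen's inequality (from $L_1$ to $L_2$) yields the factor $C_{\samplingdist_\statespace}C_{\rho,\samplingdist}/(1-\df)$. The first term gives $2\qmax C_{\rho,\samplingdist}\df^\numiters$.

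\textbf{Main obstacle.} The delicate step is Step 2. Lipschitzness of $\bellman\qfunc_k$ in actions requires bounding $\ee_{P(\cdot|\state,\minaction,\maxaction)}[V]-\ee_{P(\cdot|\state,\minaction',\maxaction')}[V]$ by total variation, giving $2\qmax L_\transitionkernel$ for $\|V\|_\infty\le\qmax$. In addition, the sup-to-$L_2$ conversion must be arranged so that the measure lower bound is used uniformly over $\state$. I would first handle these by an explicit ball-counting argument at radius $\eta_2/2$ and then tune the constants.
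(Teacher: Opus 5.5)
Your proposal is correct and follows essentially the same route as the paper. It splits $\qfunckone-\bellman\qfunck$ into the oracle bias (at most $\df\epsilon$) and a least-squares error controlled by an $\eta_1$-cover with Bernstein and Hoeffding at confidence $\nicefrac{\delta}{3\numiters}$. It converts $\|\cdot\|_{2,\samplingdist}$ to the state-wise supremum over actions via Lipschitzness in actions plus \Cref{assumption: exploratory dist,assumption: action space regularity}. It then unrolls the state-wise sup-error recursion using the concentrability coefficients and Jensen.

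The deviations are minor. In Step 2, your single-ball argument around the maximizer replaces the action-space cover of \Cref{lemma: 2 norm to sup} and gives smaller constants, even after paying the factor $\sqrt{2}$ to pass from $\|\minaction-\minaction'\|_2+\|\maxaction-\maxaction'\|_2$ to the joint $\ell^2$ ball, so the stated $\sqrt{6}$ and $2$ follow a fortiori. In Step 3, \Cref{assumption: saddle point exists} is not actually needed, because $\qfunc^*=\bellman\qfunc^*$ is defined directly through the $\inf\sup$ value.
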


The proof of \Cref{theorem: fqi bounds} is in Appendix~\ref{section: appendix fqi proof}. \Cref{theorem: fqi bounds} decomposes the error $\sup_{\minaction\in\actionspacemin,\maxaction\in\actionspacemax}\|\qfunc_\numiters-\qfunc^*\|$ for some evaluation distribution $\rho$ into several error sources. The first term is an algorithmic error that decays geometrically. The remaining terms capture the three error sources discussed above, and do not vanish with $\numiters$, representing the fundamental difficulty of the problem. The $\epsilon$ term reflects the cost of using an $\epsilon$-accurate solver as exact minimax solutions cannot be practically computed, and this term linearly vanishes with $\epsilon$ at the cost of more costly oracle computation. The presence of $\delapprox^2$ in the bound represents the cost of working with a restricted, tractable convex-concave function class $\mathcal{F}$. $\delapprox$ will be $0$ when the underlying game satisfies Bellman completeness. 

The terms $\mathcal{O}\big(\Delta_n(\eta_1,\nicefrac{\delta}{3K})\big)$ and $\mathcal{O}\big(\sqrt{\nicefrac{\log\left(\nicefrac{3K}{\delta}\right)}{2n}}\big)$ correspond to the statistical error and quantify how fast the statistical error decays with the number of samples $n$ and the covering number $N(\eta_1)$ of $\mathcal{F}$. Unlike the finite-action setting, continuous action spaces cannot be covered exactly with finitely many samples. The parameters $\eta_1,\eta_2>0$ correspond to this covering tradeoff. Smaller $\eta_1,\eta_2$ lead to better approximation but higher covering number $N(\eta_1)$ and larger inverse volume $\nicefrac{1}{\sqrt{V\left(\nicefrac{\eta_2}{2}\right)}}$, which as mentioned in \Cref{theorem: fqi bounds}, depends exponentially on the dimension of the action spaces. The covering number $N(\eta_1)$ depends on the particular architecture used to construct $\qfunc\in\mathcal{F}$ and depends polynomially on the action space dimensions. We give a bound on $N(\eta_1)$ for a particular architecture in \Cref{theorem: covering number bound} in \Cref{appendix: convering number of convex-concave q function}.

To the best of our knowledge, \Cref{theorem: fqi bounds} is the first finite-sample guarantee of this form for zero-sum Markov games with continuous state and action spaces beyond the linear quadratic setting. \Cref{theorem: fqi bounds} also allows us to establish a bound on the Nash gap when players employ their security strategies $\left(\pi_{\qfunc_\numiters},\mu_{\qfunc_\numiters}\right)$ induced by the learned $\qfunc$-function $\qfunc_\numiters$.
\begin{corollary}\label{corollary: nash gap}
    Let $\rho\in\prob(\statespace)$ be an evaluation distribution. Under \Cref{assumption: actions,assumption: reward,assumption: dyn,assumption: exploratory dist,assumption: action space regularity,assumption: FA,assumption: minimax oracle,assumption: discounted conc coeff,assumption: saddle point exists}, if $C_\rho<\infty$, the oracle-returned approximate security strategies $\left(\pi_{\qfunc_{\numiters}}(\state),\mu_{\qfunc_{\numiters}}(\state)\right)=\oracle_\epsilon\big(\state,\qfunc_\numiters\big)$ corresponding to the output $\qfunc_\numiters$ of \Cref{algo: fqi} satisfy
    \begin{align*}
\textnormal{\texttt{NashGap}}_\rho(\pi_{\qfunc_\numiters},\mu_{\qfunc_\numiters})\leq\frac{4C_\rho}{1-\df}\ee_{\state\sim\sigma_\statespace}\left[\sup_{\minaction\in\actionspacemin,\maxaction\in\actionspacemax}|\qfunc_K-\qfunc^*|\right] +\frac{2\epsilon}{1-\df}.
    \end{align*}
\end{corollary}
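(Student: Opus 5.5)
We bound each half of the Nash gap separately, using the usual performance-difference argument from approximate dynamic programming for games. Write $\hat Q:=\qfunc_\numiters$, $(\hat\pi,\hat\mu):=(\pi_{\hat Q},\mu_{\hat Q})$, and $\valfunc^*(\state)=\min_{\minaction}\max_{\maxaction}\qfunc^*(\state,\minaction,\maxaction)$. By \Cref{assumption: saddle point exists}, $\valfunc^*$ is also the maximin value. Adding and subtracting $\valfunc^*$ gives
\[
\texttt{NashGap}_\rho(\hat\pi,\hat\mu)=\ee_{\state\sim\rho}\Big[\max_{\mu'}\valfunc^{\hat\pi,\mu'}(\state)-\valfunc^*(\state)\Big]+\ee_{\state\sim\rho}\Big[\valfunc^*(\state)-\min_{\pi'}\valfunc^{\pi',\hat\mu}(\state)\Big].
\]
The two terms are symmetric, so it suffices to bound the first by $\frac{2C_\rho}{1-\df}\ee_{\sigma_\statespace}[\sup|\hat Q-\qfunc^*|]+\frac{\epsilon}{1-\df}$.

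Fix $\hat\pi$. Let $\mu'$ be a best response (or a $\zeta$-optimal one, with $\zeta\to0$ at the end), and write $\valfunc':=\valfunc^{\hat\pi,\mu'}$ and $\bar\valfunc:=\valfunc'-\valfunc^*$.

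\textbf{Pointwise one-step inequality.} Let $e(\state):=\sup_{\minaction,\maxaction}|\hat Q-\qfunc^*|(\state,\cdot,\cdot)$. For any $\maxaction$,
\[
\qfunc^*(\state,\hat\pi(\state),\maxaction)\le \hat Q(\state,\hat\pi(\state),\maxaction)+e(\state)\le \max_{\maxaction}\hat Q(\state,\hat\pi(\state),\maxaction)+e(\state).
\]
By \Cref{assumption: minimax oracle} and convex-concavity, the oracle's max value is within $\epsilon$ of the saddle value $\min\max\hat Q(\state,\cdot,\cdot)$. The saddle value of $\hat Q$ is within $e(\state)$ of $\valfunc^*(\state)$. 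Together,
\[
\qfunc^*(\state,\hat\pi(\state),\mu'(\state))\le \valfunc^*(\state)+2e(\state)+\epsilon.
\]

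\textbf{Unrolling.} Now use the Bellman equation $\valfunc'(\state)=\rwdfunc(\state,\hat\pi,\mu')+\df P^{\hat\pi,\mu'}\valfunc'(\state)$ together with $\qfunc^*=\rwdfunc+\df P\valfunc^*$. This gives
\[
\bar\valfunc(\state)=\qfunc^*(\state,\hat\pi,\mu')-\valfunc^*(\state)+\df P^{\hat\pi,\mu'}\bar\valfunc(\state)\le 2e(\state)+\epsilon+\df P^{\hat\pi,\mu'}\bar\valfunc(\state).
\]
Iterating (the kernels are positive operators and $\bar\valfunc$ is bounded) yields
\[
\bar\valfunc\le\sum_{j\ge0}\df^j(P^{\hat\pi,\mu'})^j(2e+\epsilon).
\]
Integrating against $\rho$, the $\epsilon$ part contributes $\epsilon/(1-\df)$. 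For the $e$ part, $\rho(P^{\hat\pi,\mu'})^j$ has density at most $c_j(\rho)$ relative to $\sigma_\statespace$, by \Cref{assumption: discounted conc coeff} with $\pi^1=\dots=\pi^j=(\hat\pi,\mu')$. Since $e\ge0$, this gives
\[
\sum_j\df^j c_j(\rho)\cdot 2\,\ee_{\sigma_\statespace}[e]=\frac{2C_\rho}{1-\df}\ee_{\sigma_\statespace}[e].
\]
Applying the symmetric argument to $\hat\mu$ and summing the two halves gives the stated bound $\frac{4C_\rho}{1-\df}\ee_{\sigma_\statespace}[e]+\frac{2\epsilon}{1-\df}$.

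\textbf{Main obstacle.} The delicate step is the one-step inequality. We must carefully relate the oracle output's exploitability to $\valfunc^*$ using only the gap guarantee of \Cref{assumption: minimax oracle}. The chain is $\max_{\maxaction}\hat Q(\state,\hat\pi,\maxaction)\le\min_{\minaction}\hat Q(\state,\minaction,\hat\mu)+\epsilon\le\max\min\hat Q+\epsilon$, and then the inf/sup stability bound $|\min\max\hat Q-\min\max\qfunc^*|\le e(\state)$. This is precisely where \Cref{assumption: saddle point exists} is needed, so that the min-max and max-min values of $\qfunc^*$ both equal $\valfunc^*$.

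A minor technical point is whether the supremum over $\mu'\in\Pi$ is attained. Working with $\zeta$-optimal policies sidesteps this, and measurability follows from the continuity assumptions.
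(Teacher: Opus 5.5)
Your proposal is correct and follows the paper's proof essentially step for step. It uses the same split of the Nash gap through $\valfunc^*$, the same performance-difference recursion $\bar\valfunc=\qfunc^*(\state,\hat\pi,\mu')-\valfunc^*+\df P^{\hat\pi,\mu'}\bar\valfunc$, the same one-step bound $2e(\state)+\epsilon$ and the same concentrability step; the only difference is that you reach the one-step bound via the saddle value of $\qfunc_\numiters$ and min-max stability, whereas the paper evaluates $\qfunc^*$ and $\qfunc_\numiters$ at specific argmax/argmin points. Two small remarks: your (state-uniform) $\zeta$-optimal best response is slightly more careful than the paper, which passes directly from a bound for each fixed $\mu'$ to the maximum inside $\ee_\rho$, and the step you flag as needing \Cref{assumption: saddle point exists} in fact only uses weak duality $\max\min\qfunc_\numiters\le\min\max\qfunc_\numiters$.
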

The proof of \Cref{corollary: nash gap} is in \Cref{appendix: nash gap proof}. Importantly, \Cref{corollary: nash gap} shows how the bounds developed in \Cref{theorem: fqi bounds} can be used to bound the Nash gap itself.

\section{Simulation Experiments}\label{section: experiments}
We now complement our result from \Cref{theorem: fqi bounds} with simulation experiments investigating whether the convex-concave $\qfunc$-function neural network architecture introduced in \Cref{eq: convex concave q} leads to stable training in practice. The practical issue arises because one cannot implement \Cref{algo: fqi} exactly, since each iteration requires solving a non-convex regression problem of optimizing a neural network. We therefore empirically study a practical version of \Cref{algo: fqi} in which at each iteration, we take a gradient step on the empirical loss, mimicking online temporal difference (TD) learning. At iteration $k$ of \Cref{algo: fqi}, instead of getting $\qfunckone$ from \Cref{eq: fqi regression}, we have for a step-size $\alpha>0$
\begin{align*}
\params_{k+1} = \params_k - \alpha \nabla_\params\hat{L}_n(\qfunck;\qfunck), \quad\qfunckone = \qfunc_{\params_{k+1}}.
\end{align*}
We investigate whether this practical modification to \Cref{algo: fqi} trains stably with the convex-concave $\qfunc$-function architecture of \Cref{eq: convex concave q}.

\textbf{Markov Game Setup.} We consider a planar pursuit-evasion game between two players in a $8$-dimensional state space $\statespace=[-2.5,2.5]^8$. At a state $\state\in\statespace$, the first $4$ components $(s_1,s_2), (s_3,s_4)$ denote player 1's planar position and velocity, and the last $4$ components $(s_5,s_6), (s_7,s_8)$ denote player 2's planar position and velocity. Both players have $2$-dimensional action spaces $\actionspacemin,\actionspacemax=[-1, 1]^2$, and actions denote the players' horizontal and vertical planar accelerations. Both players follow noisy double integrator dynamics $\state'=A\state + B^1\minaction+B^2\maxaction + w_n\xi$, where $A = \textrm{diag}(A_p, A_p)$, $B^1=\begin{bmatrix}
    B_p;0
\end{bmatrix}$, $B^2=\begin{bmatrix}
    0;B_p
\end{bmatrix}$,  $\xi\sim\mathcal{N}(0,I)$, 
\begin{align*}
    A_p = \begin{bmatrix}
    1 &0 & \Delta t &0\\
    0 &1 &0 &\Delta t\\
    0 &0& 1& 0\\
    0 &0 &0& 1
    \end{bmatrix},
    \text{ and } B_p = 
    \begin{bmatrix}
         \frac{1}{2}\Delta t^2& 0\\
    0& \frac{1}{2}\Delta t^2\\
    \Delta t& 0\\
    0 &\Delta t
    \end{bmatrix}.
\end{align*}
Player 1 wishes to pursue player 2, who evades player 1, while both want the save their control effort and waste the other's control effort. We encode this behavior with the immediate reward function

\begin{align*}
\rwdfunc(\state,\minaction,\maxaction) = -w\exp\left(-\frac{\left((s_1-s_5)^2 + (s_2-s_6)^2\right)}{2w_d^2}\right) + \frac{1}{2}\minaction^\top R^1\minaction - \frac{1}{2}\maxaction^\top R^2\maxaction.
\end{align*}
Note that this game lies beyond the linear-quadratic setting studied in prior work \citep{al2007adaptive,al2007model,zhang2019policy,zhang2021provably,bu2019global}, and the optimal $\qfunc$-function cannot be found by using Riccati equations. We choose the sampling distribution $\samplingdist=\big(\text{Uni}\left([-2.5,2.5]^8\right)\times \text{Uni}\left([-1,1]^2\right)\times \text{Uni}\left([-1,1]^2\right) \big)$, with independent marginals. The discount factor $\df$ is $0.95$. We set $w=0.01,w_d=0.1,w_n=0.01,\Delta t=0.1, R^1=0.01I$ and $R^2=0.02 I$. We use SecOND \citep{gupta2024second}, which is a second-order constraint-aware solver, as our minimax oracle $\oracle_\epsilon$, as we found it to converge favorably in practice. Instead of setting an $\epsilon$, we terminate the solver once successive candidate solution iterates differ by less than a tolerance of $10^{-8}$. For convex-concave, Lipschitz $\qfunc$-functions, this serves as a practical proxy for $\epsilon$.
We use the partially input convex-concave neural network as described in \Cref{section: convex concave fa} for our choice of $\qfunc$-function, with softplus activations. We empirically bound the value that the $\qfunc$-function neural network outputs by clipping values to lie between $[\nicefrac{\rwdfunc_{\text{min}}}{1-\df},  \nicefrac{\rwdfunc_{\text{max}}}{1-\df}]$, and found that while in theory this breaks convexity-concavity, it did not harm performance in practice. Additional implementation details are provided in \Cref{appendix: experimental details}.

Along the iterations $k$, \Cref{fig: mean absolute error} plots the normalized mean absolute error 
\begin{align*}
    \delta_{\text{NMAE}}:=\frac{\nicefrac{1}{n} \sum_{i=1}^n \left|\qfunck(\state_i,\minaction_i,\maxaction_i) - y_i\right|}{\nicefrac{\left(\rwdfunc_{\text{max}}-\rwdfunc_{\text{min}}\right)}{1-\df}},
\end{align*}
where $y_i$ is the regression target formed at iteration $k$ of \Cref{algo: fqi} for the $i^{\text{th}}$ data sample pair $(\state_i,\minaction_i,\maxaction_i), i\in[n]$, and the denominator $\nicefrac{\left(\rwdfunc_{\text{max}}-\rwdfunc_{\text{min}}\right)}{1-\df}$ is the largest absolute error attainable given the bounded $\qfunc$-value range derived above.

\Cref{fig: mean absolute error} shows that the TD-style modification of \Cref{algo: fqi} trains stably with the partially input convex-concave $\qfunc$-function architecture.  The error $\delta_{\text{NMAE}}$ decreases and plateaus within the first $100$ iterations across all sample sizes we test. Across all 20 seeds, there is no divergence observed across the iterations. This empirically suggests that using the partially input convex-concave architecture from \Cref{eq: convex concave q} for the $\qfunc$-function allows stable training. Though \Cref{theorem: fqi bounds} does not directly apply to this practical version, we observe that both the error and its spread shrink as $n$ increases from $100$ to $500$, which is consistent with the statistical benefit of larger per-iteration sample size $n$.

\begin{figure}[t]
\centering
\begingroup
\catcode`\"=9\relax %
\begin{tikzpicture}
\begin{axis}[
    width=0.6\linewidth,
    height=0.4\linewidth,
    xlabel={Iteration $k$},
    ylabel={$\delta_{\text{NMAE}}$},
    axis x line*=bottom,
    axis y line*=left,
    axis line style={gray!60},
    tick style={gray!60},
    ymajorgrids=false,
    yticklabel style={
        /pgf/number format/fixed,
        /pgf/number format/precision=2
    },
    scaled y ticks=false,
    legend style={
        at={(0.98,0.98)},
        anchor=north east,
        fill=white,
        fill opacity=0.9,
        draw=gray!40,
        line width=0.5pt,
        rounded corners=2pt,
        font=\small,
        /tikz/every even column/.append style={column sep=4pt},
    },
    legend cell align={left},
    xmin=0, xmax=200,
]
\addlinewithband{sections/figures/ns100_new.csv}{c1}{$n=100$}
\addlinewithband{sections/figures/ns200_new.csv}{c2}{$n=200$}
\addlinewithband{sections/figures/ns300_new.csv}{c3}{$n=300$}
\addlinewithband{sections/figures/ns400_new.csv}{c4}{$n=400$}
\addlinewithband{sections/figures/ns500_new.csv}{c5}{$n=500$}

\end{axis}
\end{tikzpicture}
\endgroup
\caption{\textbf{The TD-style modification of \Cref{algo: fqi} trains stably with the partially input convex-concave $\qfunc$-function architecture.} Each curve shows the normalized mean absolute training error $\delta_{\text{NMAE}}$ against iteration $k$, for $n$ samples per iteration. Solid lines are the mean over $20$ random seeds and shaded bands show the min--max spread across seeds. An increase in $n$ yields both lower steady-state error and lower variance.}
\label{fig: mean absolute error}
\end{figure}

\section{Conclusion}
We study the learning of $\qfunc$-functions in infinite-horizon zero-sum Markov games in continuous state and action spaces. Prior finite-sample guarantees of $\qfunc$-function learning have largely been confined to either games with finite-actions or to continuous linear-quadratic games, where the agents have linear dynamics and quadratic rewards. A key challenge in extending the guarantees of these prior works to more general continuous-action zero-sum Markov games is that the associated Bellman equation involves a minimax problem that is not guaranteed to admit a tractable saddle-point solution in pure strategies. To address this issue, we introduce a class of neural network function approximation for $\qfunc$ which is convex-concave in the players' actions, guaranteeing that the minimax problem admits a pure-strategy saddle point that can be computed efficiently by existing solvers. We study an online variant of the fitted $\qfunc$-iteration algorithm employing this function class and establish, to the best of our knowledge, the first finite-sample guarantees for continuous state-action zero-sum Markov games beyond the linear-quadratic setting. Finally, we empirically validate that these convex-concave neural networks lead to stable training when used on a practical version of modified fitted $\qfunc$-iteration. The empirical results naturally point to a future work direction of extending the finite-sample analysis for the practical temporal difference-style variant of \Cref{theorem: fqi bounds} studied in \Cref{section: experiments}, in which the exact regression step of \Cref{algo: fqi} is replaced by a single gradient step, as in \Cref{section: experiments}.

\subsection*{AI use statement}
In this work, we used generative AI tools for the following critical tasks:
\begin{itemize}
    \item Assist in a literature review about zero-sum continuous state-action Markov games.
    \item Help in understanding the proof machinery of related papers, to see how can existing mathematical techniques be used in our work. These  techniques were critical for proving our mathematical claims.
    \item Help in implementing the practical version of \Cref{algo: fqi} for experiments in \Cref{section: experiments}. 
\end{itemize}
We have not used generative AI tools for other tasks with required disclosure mentioned in the AI policy for authors like proof writing and developing conceptual frameworks. The rest of the required disclosure tasks are not applicable to this work.
Additionally, we used generative AI tools for checking grammatical correctness of our writing. We have reviewed all AI-assisted work. For the tasks listed above, we took the following measures:
\begin{itemize}
    \item Any relevant paper suggested by an AI tool was manually screened to verify its relevance to our work.
    \item The applicability of proof techniques from prior work that were suggested by AI tools were manually verified, and the proof was then manually done using these suggested techniques.
    \item LLM-generated implementation was manually verified and tested.
\end{itemize}
We take responsibility for the final content of this work,
including text, claims or artifacts produced with the aid of generative AI.
\subsection*{Ethics statement}
Our work deals with zero-sum game theory. Zero-sum games can be used for robust, safety-oriented applications, but the same techniques may be misused adversarially. Our work is mostly theoretical and does not directly enable any specific application that could be exploited in this manner.

\subsection*{Reproducibility statement}
This work is mostly theoretical and all assumptions for the main \Cref{theorem: fqi bounds} are listed before the theorem statement. The proofs of \Cref{theorem: fqi bounds} and \Cref{corollary: nash gap} are given in \Cref{section: appendix fqi proof} and \Cref{appendix: nash gap proof} respectively. Additional bounds and corresponding proofs are presented in \Cref{subsection: appendix lipschitz actions bounds F} and \Cref{appendix: convering number of convex-concave q function}. For our experiments, all details are presented in \Cref{section: experiments} and \Cref{appendix: experimental details}. Code with running instructions can be found in the supplementary material.

\subsubsection*{Acknowledgments}
DISTRIBUTION STATEMENT A. Approved for public release. Distribution is unlimited.

This material is based upon work supported by the Under Secretary of War for Research and Engineering under Air Force Contract No. FA8702-15-D-0001 or FA8702-25-D-B002. Any opinions, findings, conclusions or recommendations expressed in this material are those of the author(s) and do not necessarily reflect the views of the Under Secretary of War for Research and Engineering.

© 2026 Massachusetts Institute of Technology.

Delivered to the U.S. Government with Unlimited Rights, as defined in DFARS Part 252.227-7013 or 7014 (Feb 2014). Notwithstanding any copyright notice, U.S. Government rights in this work are defined by DFARS 252.227-7013 or DFARS 252.227-7014 as detailed above. Use of this work other than as specifically authorized by the U.S. Government may violate any copyrights that exist in this work.

Additionally, this work was also supported by the following: Army Research Office under Grant No. W911NF-23-1-0317, Office of Naval Research under Grant No. N00014-24-1-2797 and a National Science Foundation CAREER award under Grant No. 2336840

\bibliography{iclr2027_conference}
\bibliographystyle{iclr2027_conference}

\appendix
\section{Proof of \Cref{theorem: fqi bounds}}\label{section: appendix fqi proof}
We will first list several classical results. Then, to prove \Cref{theorem: fqi bounds}, we establish a per-iteration error bound on $\qfunckone - \bellman \qfunck$ in Appendix~\ref{subsection: per iteration error bound}, and in 
Appendix~\ref{subsection: error prop across iterations} show how this bound propogates across iterations and leads to the final result of \Cref{theorem: fqi bounds}.

\subsection{Useful Concentration Inequalities}
We list two common concentration inequalities here that we use. A detailed background can be found in standard references, for example, \citet{Vershynin_2026}.
\begin{fact}[Bernstein]
    If $Z_1,\dots,Z_n$ are i.i.d., $\mathbb E[Z_i]=0$, $|Z_i|\le b$ a.s., $\mathrm{Var}(Z_i)\le\tau^2$, then for every $t>0$,
$$
\mathbb P\left(\frac1n\sum_{i=1}^n Z_i > t\right) \;\le\; \exp\left(-\frac{nt^2/2}{\tau^2+bt/3}\right).
$$
Equivalently, for every $\delta\in(0,1)$, with probability $\ge1-\delta$, we have 
\begin{align*}
    \frac1n\sum_{i=1}^n Z_i \le \sqrt{\frac{2\tau^2\log(1/\delta)}{n}} + \frac{2b\log(1/\delta)}{3n}.
\end{align*}
\end{fact}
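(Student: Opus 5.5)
The plan is the classical Cramér--Chernoff argument: bound the moment generating function of each $Z_i$ using boundedness and the variance, optimize the exponential Markov bound over the free parameter, and then invert the resulting tail bound to obtain the high-probability form.

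\textbf{Step 1 (exponential Markov).} For any $\lambda>0$, independence gives
\begin{align*}
\mathbb P\Big(\tfrac1n\textstyle\sum_i Z_i>t\Big)\le e^{-\lambda n t}\,\big(\ee[e^{\lambda Z_1}]\big)^n .
\end{align*}

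\textbf{Step 2 (MGF bound).} Expand $e^{\lambda Z}=1+\lambda Z+\sum_{k\ge2}\lambda^k Z^k/k!$. Use $\ee[Z]=0$ and the bound $|\ee Z^k|\le \ee[Z^2]\,b^{k-2}\le \tau^2 b^{k-2}$. This gives
\begin{align*}
\ee[e^{\lambda Z}]\le 1+\tau^2\sum_{k\ge2}\frac{\lambda^k b^{k-2}}{k!}.
\end{align*}
Next apply $k!\ge 2\cdot 3^{k-2}$ for $k\ge 2$ and sum the geometric series. For $0<\lambda<3/b$ this yields
\begin{align*}
\ee[e^{\lambda Z}]\le 1+\frac{\lambda^2\tau^2/2}{1-\lambda b/3}\le \exp\Big(\frac{\lambda^2\tau^2/2}{1-\lambda b/3}\Big).
\end{align*}
Hence the tail is at most $\exp\big(-n[\lambda t-\tfrac{\lambda^2\tau^2/2}{1-\lambda b/3}]\big)$.

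\textbf{Step 3 (choice of $\lambda$).} Take $\lambda=t/(\tau^2+bt/3)$. This satisfies $\lambda<3/b$, and $1-\lambda b/3=\tau^2/(\tau^2+bt/3)$. Substituting, the exponent becomes
\begin{align*}
\lambda t-\frac{t^2}{2(\tau^2+bt/3)}=\frac{t^2}{2(\tau^2+bt/3)} .
\end{align*}
This is exactly the stated bound $\exp\big(-\tfrac{nt^2/2}{\tau^2+bt/3}\big)$. This is the only step requiring care: the choice of $\lambda$ and the factorial estimate must be matched so that the constant $1/3$ comes out exactly.

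\textbf{Step 4 (inversion).} Write $L=\log(1/\delta)$ and set the tail bound equal to $\delta$. This gives the quadratic $t^2-\tfrac{2bL}{3n}t-\tfrac{2\tau^2L}{n}=0$, whose positive root is
\begin{align*}
t^*=\frac{bL}{3n}+\sqrt{\frac{b^2L^2}{9n^2}+\frac{2\tau^2L}{n}} .
\end{align*}
Since the tail bound is decreasing in $t$, the event $\tfrac1n\sum_i Z_i\le t^*$ has probability at least $1-\delta$. Finally, $\sqrt{x+y}\le\sqrt x+\sqrt y$ gives
\begin{align*}
t^*\le \sqrt{\frac{2\tau^2\log(1/\delta)}{n}}+\frac{2b\log(1/\delta)}{3n},
\end{align*}
which is the claimed high-probability form.
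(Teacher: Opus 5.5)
Your proof is correct. The paper gives no proof of its own and only cites standard references, and your Cram\'er--Chernoff argument (the moment generating function bound via $k!\ge 2\cdot 3^{k-2}$, the choice $\lambda=t/(\tau^2+bt/3)$, and inverting the quadratic with $\sqrt{x+y}\le\sqrt{x}+\sqrt{y}$) is the standard textbook route found there. The one small gap is that your claim $\lambda<3/b$ needs $\tau^2>0$; when $\tau^2=0$ you have $Z_i=0$ almost surely and the bound holds trivially, so you should state that case separately.
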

\begin{fact}[Hoeffding]
    If $Z_i,i\in[n]$ are bounded random variables such that $a_i\leq Z_i\leq b_i$ almost surely, then $\forall~t>0$
    \begin{align*}
        P\left(\frac{1}{n}\sum_{i=1}^n Z_i -\frac{1}{n}\sum_{i=1}^n\ee\left[ Z_i\right]\right)\leq \exp\left(-\frac{2n^2t^2}{\sum_{i=1}^n(b_i-a_i)^2}\right).
    \end{align*}
    Equivalently, for $\delta>0$, we have with probability $\geq1-\delta$
    \begin{align*}
        \frac{1}{n}\sum_{i=1}^n Z_i\leq \frac{1}{n}\sum_{i=1}^n\ee\left[Z_i\right] + \sqrt{\frac{\sum_i(b_i-a_i)^2\log\left(\nicefrac{1}{\delta}\right)}{2n^2}}.
    \end{align*}
\end{fact}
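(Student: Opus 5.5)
The statement immediately above is the Hoeffding fact. I will prove its tail form with the Chernoff--Hoeffding method; the high-probability form then follows by inversion. Read as intended, the tail bound is that for independent $Z_i$ with $a_i\le Z_i\le b_i$ a.s.,
\[
P\Big(\tfrac{1}{n}\textstyle\sum_i Z_i-\tfrac1n\sum_i\ee[Z_i]\ge t\Big)\le \exp\big(-2n^2t^2/\textstyle\sum_i(b_i-a_i)^2\big).
\]
Independence is implicit in the statement, and the argument requires it.

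The first step is Hoeffding's lemma. If $X$ satisfies $\ee[X]=0$ and $a\le X\le b$, then $\ee[e^{\lambda X}]\le \exp(\lambda^2(b-a)^2/8)$ for all $\lambda\in\real$. To prove it, I use convexity of $x\mapsto e^{\lambda x}$ on $[a,b]$ to bound
\[
e^{\lambda X}\le \frac{b-X}{b-a}e^{\lambda a}+\frac{X-a}{b-a}e^{\lambda b}.
\]
Taking expectations with $\ee[X]=0$ gives $\ee[e^{\lambda X}]\le e^{\phi(h)}$, where $h=\lambda(b-a)$, $p=-a/(b-a)$, and $\phi(h)=-hp+\log(1-p+pe^{h})$. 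One checks that $\phi(0)=\phi'(0)=0$. Moreover, $\phi''(h)=q(1-q)\le 1/4$, where $q=pe^h/(1-p+pe^h)\in[0,1]$. Taylor's theorem then gives $\phi(h)\le h^2/8$. This lemma is the only nontrivial computation, and the main care lies in verifying the second-derivative bound.

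Next comes the Chernoff step. Set $X_i=Z_i-\ee[Z_i]$, which lies in an interval of length $b_i-a_i$. For any $\lambda>0$, Markov's inequality and independence give
\[
P\Big(\textstyle\sum_i X_i\ge nt\Big)\le e^{-\lambda nt}\prod_i\ee[e^{\lambda X_i}]\le \exp\Big(-\lambda nt+\tfrac{\lambda^2}{8}\textstyle\sum_i(b_i-a_i)^2\Big).
\]
Optimizing at $\lambda=4nt/\sum_i(b_i-a_i)^2$ yields exactly $\exp(-2n^2t^2/\sum_i(b_i-a_i)^2)$.

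Finally, for the equivalent form, I set the right-hand side equal to $\delta$ and solve for $t$:
\[
t=\sqrt{\textstyle\sum_i(b_i-a_i)^2\log(1/\delta)/(2n^2)}.
\]
Taking the complement event gives, with probability at least $1-\delta$, $\frac1n\sum_i Z_i\le \frac1n\sum_i\ee[Z_i]+t$, as claimed.
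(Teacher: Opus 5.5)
Your proof is correct: the Chernoff bound combined with Hoeffding's lemma is the standard textbook route to the sharp constant $2$, and the paper gives no proof of its own, deferring to standard references such as Vershynin. You were also right to read in the independence hypothesis and the missing ``$\geq t$'' in the displayed probability; the paper's statement omits both, and your argument needs them.
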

\subsection{Per-iteration Error Bound}\label{subsection: per iteration error bound}
We will first bound the per-iteration error incurred by \Cref{algo: fqi}. For convenience we restate the minimax Bellman operator $\bellman$ from \Cref{eq: bellman operator} and approximate minimax Bellman operator $\approxbellman$ from \Cref{eq: approx bellman operator}:
\begin{align*}
    \bellman\qfunc\left(\state,\minaction,\maxaction\right)=\rwdfunc\left(\state,\minaction,\maxaction\right) + \df \ee_{\state'\sim\transitionkernel(\cdot\vert\state,\minaction,\maxaction)}\left[\min_{\tilde{\minaction}\in\actionspacemin}\max_{\tilde{\maxaction}\in\actionspacemax}\qfunc\left(\state',\tilde{\minaction},\tilde{\maxaction}\right)\right],\\
    \approxbellman\qfunc(\state,\minaction,
    \maxaction)= \rwdfunc(\state,\minaction,
    \maxaction) + \df\ee_{\state'\sim\transitionkernel\left(\cdot\vert\state,\minaction,\maxaction\right)}\Big[\qfunc\big(\state', \oracle_\epsilon(\qfunc(\state',\cdot,\cdot))\big)\Big].
\end{align*}

Let us define the error at iteration $k$ as $\error_k:=\qfunckone - \bellman\qfunck$. As mentioned in \Cref{section: online minimax fqi}, besides the fact that we can only approximate $\bellman\qfunc_k$, there are additional sources of error: (i) the restriction of $\qfunckone\in\mathcal{F}$, while $\bellman\qfunck,\approxbellman\qfunck$ need not belong to $\mathcal{F}$, and (ii) a statistical error stemming from the use of finite samples. We decompose the error $\error_k$ as
\begin{align*}
        \error_k&=\underbrace{\left(\qfunckone-\approxbellman\qfunck\right)}_{\text{statistics + approximation}}+\underbrace{\left(\approxbellman\qfunck - \bellman\qfunck\right)}_{\text{oracle bias}}\\
        \implies \left\|\error_k\right\|_{2,\samplingdist} &\leq \left\|\qfunckone-\approxbellman\qfunck\right\|_{2,\samplingdist} + \left\|\approxbellman\qfunck-\bellman\qfunck\right\|_{2,\samplingdist}.
\end{align*}
From the use of $\oracle_\epsilon$ and \Cref{assumption: minimax oracle},
the oracle bias is uniformly bounded. It is not difficult to see that
\begin{align}\label{eq: oracle bias uniformly bounded}
    \left\|\approxbellman\qfunck-\bellman\qfunck\right\|_{2,\samplingdist} \leq \df\epsilon.
\end{align}
We will now analyze the $\|\qfunckone-\approxbellman\qfunck\|_{2,\samplingdist}$ term. Let $\xconc:=(\state,\minaction,\maxaction)\in\xspace:=(\statespace\times\actionspacemin\times\actionspacemax)$. It is clear from \Cref{algo: fqi} that $y_i$ is an unbiased conditional estimate of $\approxbellman\qfunck(\state_i,\minaction_i,\maxaction_i)=\approxbellman\qfunck(\xconc_i)$, i.e., $y_i=\approxbellman\qfunck + \ynoise_i$, where $\ee\left[\ynoise~\vert~\qfunck,\xconc_i\right] = 0$. We define 
\begin{align*}
    \bestfk\in\arg\min_{f\in\mathcal{F}}\left\|f-\approxbellman\qfunck\right\|_{2,\samplingdist}.
\end{align*}
It will be useful to quantify how close $\qfunckone$ is to $\approxbellman\qfunck$ empirically for the batch of data samples compared to $\bestfk$. We formalize this in the next lemma.
\begin{lemma}\label{lemma: separation bound}
    From \Cref{algo: fqi}, $\qfunckone = \arg\min_{f\in\mathcal F}\widehat L_n(f;\qfunck)$ yields
    \begin{align*}
        \frac{1}{n}\sum_{i=1}^n\left(\qfunckone(\xconc_i)-\approxbellman\qfunck(\xconc_i)\right)^2 \leq \frac{1}{n}\sum_{i=1}^n\left(\bar f(\xconc_i)-\approxbellman\qfunck(\xconc_i)\right)^2 + \frac{2}{n}\sum_{i=1}^n \xi_i\left(\qfunckone(\xconc_i)-\bar f(\xconc_i)\right).
    \end{align*}
\end{lemma}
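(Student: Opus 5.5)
This is the standard basic inequality for least squares, and it follows from optimality of $\qfunckone$ by direct algebra. I would write $g := \approxbellman\qfunck$ for brevity and use the decomposition $y_i = g(\xconc_i) + \ynoise_i$ introduced just before the lemma. Here $\bar f$ denotes $\bestfk$, though any fixed $\bar f\in\mathcal F$ works. The only input needed is that $\qfunckone$ minimizes $\widehat L_n(\cdot;\qfunck)$ over $\mathcal F$ and $\bar f\in\mathcal F$, so that
\begin{align*}
\frac{1}{n}\sum_{i=1}^n\big(\qfunckone(\xconc_i)-y_i\big)^2 \le \frac{1}{n}\sum_{i=1}^n\big(\bar f(\xconc_i)-y_i\big)^2 .
\end{align*}

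Next, substitute $y_i = g(\xconc_i)+\ynoise_i$ on both sides and expand each square. For a generic $f$ this gives $(f(\xconc_i)-g(\xconc_i)-\ynoise_i)^2 = (f(\xconc_i)-g(\xconc_i))^2 - 2\ynoise_i(f(\xconc_i)-g(\xconc_i)) + \ynoise_i^2$. The $\ynoise_i^2$ terms are identical on both sides and cancel. Moving the cross terms to the right-hand side then yields
\begin{align*}
\frac{1}{n}\sum_{i=1}^n\big(\qfunckone(\xconc_i)-g(\xconc_i)\big)^2 \le \frac{1}{n}\sum_{i=1}^n\big(\bar f(\xconc_i)-g(\xconc_i)\big)^2 + \frac{2}{n}\sum_{i=1}^n \ynoise_i\big(\qfunckone(\xconc_i)-g(\xconc_i)\big) - \frac{2}{n}\sum_{i=1}^n \ynoise_i\big(\bar f(\xconc_i)-g(\xconc_i)\big).
\end{align*}
The two noise sums combine: the $g(\xconc_i)$ contributions cancel, leaving $\frac{2}{n}\sum_i \ynoise_i(\qfunckone(\xconc_i)-\bar f(\xconc_i))$. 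This is exactly the claimed inequality.

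I do not expect any step to be a genuine obstacle, since the argument is deterministic and uses no concentration or zero-mean property of $\ynoise_i$. The only points to check are two. First, the minimizer $\qfunckone$ must exist, which is granted by the algorithm's statement that the regression is solved exactly. Second, $\bar f$ must be a member of $\mathcal F$ that does not depend on the current batch. That independence is irrelevant here, but it matters later, when the noise term is controlled using $\ee[\ynoise_i\mid\qfunck,\xconc_i]=0$ together with a covering argument over $\mathcal F$ for the data-dependent $\qfunckone$.
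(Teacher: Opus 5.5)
Your proof is correct and matches the paper's argument: use $\widehat L_n(\qfunckone;\qfunck)\le\widehat L_n(\bestfk;\qfunck)$, substitute $y_i=\approxbellman\qfunck(\xconc_i)+\ynoise_i$, expand the squares, cancel the $\ynoise_i^2$ terms, and rearrange. You are also right that this step is purely deterministic and holds for any fixed $\bar f\in\mathcal F$.
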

\begin{proof}
    As $\qfunckone$ is optimal for the regression problem at iteration $k$, we have
    \begin{align*}
        \widehat L_n(\qfunckone;\qfunck) &\leq \widehat L_n(\bestfk;\qfunck)\\
        \implies \frac{1}{n}\sum_i\left(\qfunckone(\xconc_i)-y_i\right)^2 &\leq \frac{1}{n}\sum_i(\bar f(\xconc_i)-y_i)^2\\
        \implies \frac{1}{n}\sum_i\left(\qfunckone(\xconc_i)-\approxbellman\qfunck(\xconc_i)-\ynoise_i\right)^2 &\leq \frac{1}{n}\sum_i(\bar f(\xconc_i)-\approxbellman\qfunck(\xconc_i)-\ynoise_i)^2
    \end{align*}
    Opening squares, canceling the $\ynoise_i$ terms and rearranging yields the claim.
\end{proof}
The $\|\qfunckone-\approxbellman\qfunck\|_{2,\samplingdist}$ term can be bounded by uniformly bounding the two right hand side terms of \Cref{lemma: separation bound}. We will first bound the term $\frac{2}{n}\sum_{i=1}^n \xi_i\left(\qfunckone(\xconc_i)-\bar f(\xconc_i)\right)$. 

For $f\in\mathcal F$, let $Z_i(f):=\ynoise_i\big(f(\xconc_i)-\approxbellman\qfunck(\xconc_i)\big)$.
Note that we can apply Bernstein's inequality on $Z_i$. To see this, $\ee\left[ Z_i\right] = \ee_{\xconc_i}\left[ \ee\left[Z_i|\xconc_i\right]\right]=0$, and from \Cref{assumption: reward,assumption: FA}, we have $|y_i|\leq \qmax$, $|\approxbellman\qfunck(\xconc_i)|\leq\qmax$, which implies $|\ynoise_i|\leq 2\qmax$. Thus,
\begin{align*}
    |Z_i(f)|\leq 2\qmax\cdot\|f-\approxbellman\|_\infty\leq 4\qmax^2=:b
\end{align*}
Further, we have
\begin{align*}
    \mathrm{Var}(Z_i(f))\leq \ee\left[\xi_i^2(f(x_i)-\approxbellman\qfunck(x_i))^2\right]\le 4Q_{\max}^2\,\|f-\approxbellman\qfunck\|_{2,\sigma}^2 \le 16Q_{\max}^4=:\tau^2
\end{align*}
Thus, for $f\in\mathcal{F}$, for $\delta'\in(0,1)$, we have with probability at least $1-\delta'$ 
\begin{align*}
\frac{1}{n}\sum_i Z_i(f)\leq 4\qmax^2\sqrt{\frac{2\log(1/\delta')}{n}} + \frac{8\qmax^2\log(1/\delta')}{3n}
\end{align*}
To extend this statement uniformly to the whole family  $\mathcal{F}$, for a fixed $\eta_1>0$, let $\coveringnumber:=\coveringnumber(\eta_1):=\coveringnumber(\mathcal{F}, \eta_1, \|\cdot\|_\infty)$ be the minimal covering number of $\mathcal{F}$ in the sup norm with associated cover $\{f^{(1)},\dots,f^{(\coveringnumber)}\}$. Note that $f\mapsto Z_i(f)$ is $2\qmax$-Lipschitz in the sup norm. To see this, for $f,f'\in\mathcal{F}$, we have $|Z_i(f)-Z_i(f')|=|\ynoise\cdot\left(f(\xconc_i)-f'(\xconc_i)\right)|\leq 2\qmax \|f-f'\|_\infty$. Now, for each $f\in\mathcal{F}$, we have from the $\eta_1$-cover that $\exists~j\in[\coveringnumber]$ such that $\|f-f^{(j)}\|_\infty\leq \eta_1$. Thus,
\begin{align*}
    \sup_{f\in\mathcal{F}} \left|\frac{1}{n}\sum_i Z_i(f)\right| \leq \max_{1\leq j\leq \coveringnumber}\left|\frac{1}{n}\sum_i Z_i(f^{(j)})\right| +2\qmax\eta_1.
\end{align*}
We can use the point wise bounds on the cover points $f^{(j)}$ with $\delta'=\nicefrac{\delta}{2\coveringnumber(\eta_1)}$, where the factor of 2 comes because we bound the absolute value. Using the union bound yields with probability at least $1-\delta$
\begin{align}\label{eq: Z_i bernstein}
    \nonumber\sup_{f\in\mathcal F}\left|\frac1n\sum_i Z_i(f)\right| &\leq 4\qmax^2\sqrt{\frac{2\log(2\coveringnumber(\eta_1)/\delta)}{n}} + \frac{8\qmax^2\log(2\coveringnumber(\eta_1)/\delta)}{3n} + 2\qmax\eta_1 \\&=: \Delta_n(\eta_1,\delta).
\end{align}
Thus, we have
\begin{align}\label{eq: cross term bernstein}
\frac{2}{n}\sum_i\ynoise_i\left(\qfunckone(\xconc_i)-\bestfk(\xconc_i)\right) = \frac{2}{n}\sum_iZ_i(\qfunckone)- \frac{2}{n}\sum_iZ_i(\bestfk) \leq 4\Delta_n(\eta_1,\delta)
\end{align}

We need to also connect the empirical LHS term of \Cref{lemma: separation bound}, $ \frac{1}{n}\sum_{i=1}^n\big(\qfunckone(\xconc_i)-\approxbellman\qfunck(\xconc_i)\big)^2$ with the bound $\| \qfunckone - \approxbellman\qfunck\|^2_{2,\samplingdist}$. For $f\in\mathcal{F}$, let us define a centered random variable
\begin{align*}
    X_i:=\left(f(\xconc_i)-\approxbellman\qfunck(\xconc_i)\right)^2 - \left\| f - \approxbellman\qfunck\right\|^2_{2,\samplingdist}.
\end{align*}
To analyze $\frac{1}{n}\sum_{i=1}^n\big(\bar f(\xconc_i)-\approxbellman\qfunck(\xconc_i)\big)^2$, we can again apply Bernstein's inequality on for $X_i$. To see this, we have $\ee\big[X_i\big]=0$, $|X_i| = \big|(f(\xconc_i)-\approxbellman\qfunck(\xconc_i)\big)^2 - \big\| f - \approxbellman\qfunck\big\|^2_{2,\samplingdist}\big|\leq 4\qmax^2=b$, and $\variance(X_i) = \variance\big((f(\xconc_i)-\approxbellman\qfunck(\xconc_i))^2\big)\leq \ee\big[(f(\xconc_i)-\approxbellman\qfunck(\xconc_i))^4\big]\leq 16\qmax^4=\tau^2$. Also, the function $h(f):=(f(\xconc)-\approxbellman\qfunck(\xconc))^2$ is $4\qmax$-Lipschitz in the sup norm as for $f,f'\in\mathcal{F}$, $|h(f)-h(f')|=|f(\xconc)-f'(\xconc)|\cdot |f(\xconc)+f'(\xconc)-2\approxbellman\qfunck(\xconc)|\leq 4\qmax \|f-f'\|_\infty$. Analyzing similarly to $Z_i$ yields
\begin{align}\label{eq: X_i bernstein}
    \left|\frac{1}{n}\sum_i\left(f(\xconc_i)-\approxbellman\qfunck(\xconc_i)\right)^2 - \|f-\approxbellman\qfunck\|_{2,\sigma}^2\right| \leq \Delta_n(\eta_1,\delta)+6\qmax\eta_1.
\end{align}

We now analyze the term $\frac{1}{n}\sum_{i=1}^n\big(\bar f(\xconc_i)-\approxbellman\qfunck(\xconc_i)\big)^2$ from \Cref{lemma: separation bound}. Note that for each $i$, $\big(\bar f(\xconc_i)-\approxbellman\qfunck(\xconc_i)\big)^2$ is i.i.d. with mean $\ee_{\xconc\sim\samplingdist}\big[\big(\bar f(\xconc)-\approxbellman\qfunck(\xconc)\big)^2\big]=\big\|\big(\bar f-\approxbellman\qfunck\big)\big\|^2_{2,\samplingdist}\leq \delapprox^2$ from the definition of $\delapprox$ from \Cref{eq: delapprox}. Also $\big(\bar f(\xconc_i)-\approxbellman\qfunck(\xconc_i)\big)^2\in\left[0,4\qmax^2\right]$. Thus, we can make use of Hoeffding's inequality. For $\delta>0$, we have with probability at least $1-\delta$ that
\begin{align}\label{eq: f bar hoeffding}
    \nonumber\frac{1}{n}\sum_i\left(\bestfk(\xconc_i)-\approxbellman\qfunck\right)^2&\leq \left\|\left(\bar f-\approxbellman\qfunck\right)\right\|^2_{2,\samplingdist} + 4\qmax^2\sqrt{\frac{\log\left(\nicefrac{1}{\delta}\right)}{2n}}\\
    &\leq \delapprox^2+ 4\qmax^2\sqrt{\frac{\log\left(\nicefrac{1}{\delta}\right)}{2n}}.
\end{align}

We will now combine these inequalities to get a per iteration error bound on $\error_k$. 
\begin{lemma}[Per iteration error bound]\label{lemma: per iteration error bound}
    Under \Cref{assumption: actions,assumption: reward,assumption: dyn,assumption: exploratory dist,assumption: FA,assumption: minimax oracle}, for $\delta>0$, for $\error_k=\|\qfunckone-\bellman\qfunck\|$, we have with probability at least $\geq 1-\delta$, the error at the $k^{th}$ iteration
    \begin{align*}
        \|\error_k\|_{2,\samplingdist} \leq \gamma\epsilon + \sqrt{\delapprox^2+ 5\Delta_n(\eta_1,\frac{\delta}{3}) + 4\qmax^2\sqrt{\frac{\log\left(\nicefrac{3}{\delta}\right)}{2n}} + 6\qmax\eta_1}~.
    \end{align*}
\end{lemma}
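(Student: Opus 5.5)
The plan is to start from the triangle-inequality decomposition already written above. We have $\|\error_k\|_{2,\samplingdist}\le\|\qfunckone-\approxbellman\qfunck\|_{2,\samplingdist}+\|\approxbellman\qfunck-\bellman\qfunck\|_{2,\samplingdist}$, and the second term is at most $\df\epsilon$ by \Cref{eq: oracle bias uniformly bounded}. That bound holds deterministically, via \Cref{assumption: minimax oracle} and convex-concavity. Under convex-concavity and Sion's theorem, the value $\qfunck(\state',\tilde\minaction,\tilde\maxaction)$ is sandwiched between $\min_{\minaction}\qfunck(\state',\minaction,\tilde\maxaction)$ and $\max_{\maxaction}\qfunck(\state',\tilde\minaction,\maxaction)$. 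The saddle value lies between these same two quantities, so the difference is at most $\epsilon$. Multiplying by $\df$ and taking the expectation and the $L_2(\samplingdist)$ norm gives the claim. It therefore suffices to show that, with probability at least $1-\delta$,
\begin{align*}
\|\qfunckone-\approxbellman\qfunck\|^2_{2,\samplingdist}\le \delapprox^2+5\Delta_n(\eta_1,\tfrac{\delta}{3})+4\qmax^2\sqrt{\tfrac{\log(3/\delta)}{2n}}+6\qmax\eta_1,
\end{align*}
and then take square roots.

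To get this, I would invoke three high-probability events, each at confidence level $\delta/3$, and combine them with a union bound; this accounts for the $\delta/3$ and $3/\delta$ appearing in the statement.
\begin{itemize}
\item Event (a) is the uniform bound \Cref{eq: X_i bernstein} applied to $f=\qfunckone$. It converts the population quantity into the empirical one: $\|\qfunckone-\approxbellman\qfunck\|^2_{2,\samplingdist}\le\frac1n\sum_i(\qfunckone(\xconc_i)-\approxbellman\qfunck(\xconc_i))^2+\Delta_n(\eta_1,\delta/3)+6\qmax\eta_1$.
\item Event (b) is the uniform cross-term bound \Cref{eq: cross term bernstein}, contributing at most $4\Delta_n(\eta_1,\delta/3)$.
\item Event (c) is the Hoeffding bound \Cref{eq: f bar hoeffding} on the $\bestfk$ term, with $\delta/3$, contributing $\delapprox^2+4\qmax^2\sqrt{\log(3/\delta)/(2n)}$.
\end{itemize}
Plugging (b) and (c) into the right-hand side of \Cref{lemma: separation bound} with $\bar f=\bestfk$ bounds the empirical loss of $\qfunckone$. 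Chaining with (a) gives exactly $\delapprox^2+5\Delta_n+4\qmax^2\sqrt{\cdot}+6\qmax\eta_1$.

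The main obstacle is the data dependence of $\qfunckone$. Because $\qfunckone$ is fit on the same samples, the bounds in (a) and (b) must hold uniformly over $\mathcal F$ rather than pointwise. This is exactly why I rely on the covering-based uniform versions, \Cref{eq: Z_i bernstein} and \Cref{eq: X_i bernstein}, rather than plain Bernstein, and I would check that their Lipschitz-in-$f$ constants are what produce the $2\qmax\eta_1$ and $6\qmax\eta_1$ slack terms.

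By contrast, $\bestfk$ depends only on $\qfunck$, which is independent of the fresh batch used at iteration $k$ thanks to online sampling. So, conditional on $\qfunck$, Hoeffding applies pointwise in (c). For the same reason, conditional on $\qfunck$, the pairs $(\xconc_i,\ynoise_i)$ are i.i.d., which justifies the Bernstein steps. I would state this conditioning explicitly and note that the bounds, holding conditionally with probability at least $1-\delta$, also hold unconditionally.
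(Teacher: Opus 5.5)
Your proposal is correct and follows the paper's proof essentially step for step. You use the triangle inequality with the deterministic $\gamma\epsilon$ oracle bias, then chain the uniform bound \Cref{eq: X_i bernstein} at $f=\qfunckone$, \Cref{lemma: separation bound} with $\bar f=\bestfk$, the Hoeffding bound \Cref{eq: f bar hoeffding}, and the cross-term bound \Cref{eq: cross term bernstein} to reach $\delapprox^2+5\Delta_n+4\qmax^2\sqrt{\cdot}+6\qmax\eta_1$. Your explicit $\delta/3$ union bound and conditioning on $\qfunck$ make precise what the paper leaves implicit; as a minor aside, the oracle-bias sandwich needs only weak duality, not Sion's theorem.
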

\begin{proof}
    We have
    \begin{align*}
    \|\qfunckone-\approxbellman\qfunck\|_{2,\sigma}^2&\stackrel{\Cref{eq: X_i bernstein}}{\leq} \frac{1}{n}\sum_i\left(\qfunckone\left(\xconc_i\right)-\approxbellman\qfunck\left(\xconc_i\right)\right)^2+\Delta_n(\eta_1,\delta)+6\qmax\eta_1\\
    &\stackrel{\Cref{lemma: separation bound}}{\leq} \frac{1}{n}\sum_{i=1}^n\left(\bar f(\xconc_i)-\approxbellman\qfunck(\xconc_i)\right)^2 + \frac{2}{n}\sum_{i=1}^n \xi_i\left(\qfunckone(\xconc_i)-\bar f(\xconc_i)\right)\\
    &\quad\quad\quad\quad\quad\quad+\Delta_n(\eta_1,\delta)+6\qmax\eta_1\\
    &\stackrel{\Cref{eq: f bar hoeffding}}{\leq}\frac{2}{n}\sum_{i=1}^n \xi_i\left(\qfunckone(\xconc_i)-\bar f(\xconc_i)\right)+ \delapprox^2+ 4\qmax^2\sqrt{\frac{\log\left(\nicefrac{1}{\delta}\right)}{2n}}\\
    &\quad\quad\quad\quad\quad\quad+\Delta_n(\eta_1,\delta)+6\qmax\eta_1\\
    &\stackrel{\Cref{eq: cross term bernstein}}{\leq} \delapprox^2+ 5\Delta_n(\eta_1,\delta) + 4\qmax^2\sqrt{\frac{\log\left(\nicefrac{1}{\delta}\right)}{2n}} + 6\qmax\eta_1.
\end{align*}
\end{proof}

\subsection{Error Propogation Across Iterations}\label{subsection: error prop across iterations}
We have so far a bound on $\|\error_k(\state,\minaction,\maxaction)\|_{2,\samplingdist}$. For error propogation across iterations, we shall need a bound on $\bar{\error}_k:=\|\sup_{\minaction,\maxaction}\error_k(\state,\minaction,\maxaction)\|_{2,\samplingdist}$
To that end, we have the following result for our setting, analogous to single-agent versions used in prior work \citep{antos2007fitted}.
\begin{lemma}\label{lemma: 2 norm to sup}
    Let $f:\statespace
    \times\actionspacemin\times\actionspacemax\rightarrow\bb{R}$ be such that $\forall~\state\in\statespace,f(\state,\cdot,\cdot)$ is jointly $L_f$-Lipschitz and non-negative. Let $\mathcal{B}(\bm a, \eta)$ denote an $\ell^2$ ball in $\actionspacemin\times\actionspacemax$ centered at $\bm a\in\actionspacemin\times\actionspacemax$ with radius $\eta$. Then for $\eta_2>0$, we have
    \begin{align*}
        \ee_{\state\sim\samplingdist_\statespace}\left[\sup_{\minaction,\maxaction}f(\state,\minaction,\maxaction)^2\right] \leq \frac{4}{c\volume\left(\nicefrac{\eta_2}{2}\right)}\|f(\state,\minaction,\maxaction)\|_{2,\sigma}^2 + 3L_f^2\eta_2^2,
    \end{align*}
where $\volume(\eta)$ is the lebesgue measure $V(\eta)=\inf_{\bm a\in\actionspacemin\times\actionspacemax}\lambda\Big(\mathcal{B}(\bm a, \eta)\cap (\actionspacemin\times\actionspacemax)\Big)$.
\end{lemma}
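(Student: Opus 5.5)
Fix $\state\in\statespace$ and let $(\minaction^\star,\maxaction^\star)$ be a maximizer of $f(\state,\cdot,\cdot)$ over the compact set $\actionspacemin\times\actionspacemax$. A maximizer exists because $f(\state,\cdot,\cdot)$ is Lipschitz, hence continuous. The plan is a pointwise-to-average argument: Lipschitzness keeps $f$ close to its supremum on a whole neighbourhood of the maximizer, and \Cref{assumption: exploratory dist} says this neighbourhood carries enough $\samplingdist$-mass.

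\textbf{Step 1 (pointwise lower bound).} Let $\bm a^\star:=(\minaction^\star,\maxaction^\star)$ and set $B_\state:=\mathcal{B}(\bm a^\star,\nicefrac{\eta_2}{2})\cap(\actionspacemin\times\actionspacemax)$. Write $M(\state):=\sup_{\minaction,\maxaction}f(\state,\minaction,\maxaction)$. For every $(\minaction,\maxaction)\in B_\state$ we have $\|\minaction-\minaction^\star\|_2+\|\maxaction-\maxaction^\star\|_2\le\sqrt2\,\|(\minaction,\maxaction)-\bm a^\star\|_2\le \nicefrac{\sqrt2\,\eta_2}{2}$. Hence $f(\state,\minaction,\maxaction)\ge M(\state)-L_f\eta_2$ up to a constant factor on $\eta_2$. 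This constant is what will produce the $3L_f^2\eta_2^2$ term, so I will simply carry it as $\eta_2$.

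\textbf{Step 2 (square and handle the sign).} I want a bound on $M(\state)^2$ that stays valid when $M(\state)-L_f\eta_2$ is negative. By non-negativity of $f$ and Step 1, $M(\state)\le f(\state,\minaction,\maxaction)+L_f\eta_2$ on $B_\state$. Using $(x+y)^2\le \tfrac43x^2+4y^2$ (Young's inequality with weights chosen to match the constants), or more crudely $(x+y)^2\le 2x^2+2y^2$, I get
\[
M(\state)^2\le C_1 f(\state,\minaction,\maxaction)^2+C_2L_f^2\eta_2^2 .
\]
I will tune $C_1,C_2$ afterwards so that the final constants read $4$ and $3$.

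\textbf{Step 3 (average over the ball).} Integrate the inequality of Step 2 over $(\minaction,\maxaction)\in B_\state$ against Lebesgue measure and divide by $\lambda(B_\state)\ge \volume(\nicefrac{\eta_2}{2})$. This gives
\[
M(\state)^2\le \frac{C_1}{\volume(\nicefrac{\eta_2}{2})}\int_{B_\state}f^2\,d\lambda+C_2L_f^2\eta_2^2 .
\]
By \Cref{assumption: exploratory dist}, $d\lambda\le \tfrac1c\,p(\minaction,\maxaction|\state)\,d\lambda$ on the action space. Therefore
\[
\int_{B_\state}f^2\,d\lambda\le \frac1c\,\ee_{(\minaction,\maxaction)\sim p(\cdot|\state)}\big[f(\state,\minaction,\maxaction)^2\big].
\]
Finally, take $\ee_{\state\sim\samplingdist_\statespace}$ and use the disintegration $\samplingdist=\samplingdist_\statespace\otimes p(\cdot|\state)$. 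This turns the right-hand side into $\frac{C_1}{c\volume(\nicefrac{\eta_2}{2})}\|f\|_{2,\samplingdist}^2+C_2L_f^2\eta_2^2$. Measurability of $\state\mapsto M(\state)$ follows because, by continuity in the actions, the supremum can be taken over a countable dense subset.

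\textbf{Main obstacle.} The only delicate point is bookkeeping of the constants: the $\sqrt2$ coming from the $\ell^1$-type Lipschitz norm, and the Young weights. These must be chosen so that the bound comes out exactly as $\frac{4}{c\volume}$ and $3L_f^2\eta_2^2$. My first attempt will be to use the radius $\nicefrac{\eta_2}{2}$ ball, which gives a distance of at most $\nicefrac{\sqrt2\eta_2}{2}<\eta_2$, together with $(x+y)^2\le 4x^2+\tfrac43y^2$. This yields coefficient $4$ on the first term and $\tfrac43\cdot\tfrac{2}{4}\cdot L_f^2\eta_2^2\le 3L_f^2\eta_2^2$ on the second, so the stated constants are achievable with slack.
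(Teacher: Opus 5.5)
Your proposal is correct and follows essentially the same route as the paper. Both arguments localize on a ball of radius $\eta_2/2$ around a (near-)maximizing action, use Lipschitzness to compare $f$ on that ball with its peak value, and convert the ball integral into $\|f\|_{2,\samplingdist}^2$ via the density floor $c$ and $\volume(\nicefrac{\eta_2}{2})$. The only difference is that the paper first passes through an $\eta_2$-net of $\actionspacemin\times\actionspacemax$ and squares the lower bound $f\geq f(\state,\minaction^j,\maxaction^j)-L_f\eta_2/2$ via $(\max\{a-b,0\})^2\geq a^2/2-b^2$, whereas you use the exact maximizer and square the upper bound $M(\state)\leq f+L_f\eta_2$; this avoids the extra factor of $2$ incurred by the net and leaves slack in both constants, under either the Euclidean reading of joint Lipschitzness that the paper's proof uses or the $\ell^1$-type reading you adopt.
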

\begin{proof}
    Let $\{(\minaction^j,\maxaction^j)\}_{j\in[\coveringnumber(\eta_2)]}$ be the minimal cover of $\actionspacemin\times\actionspacemax$ associated with $\eta_2>0$ in the $\|\cdot\|_2$ norm. Consider $(\minaction,\maxaction)$ and let its closest cover point be $(\minaction^j,\maxaction^j)$. Then we have
    \begin{align} \label{eq: propgate cover lower bound}
       \nonumber &\left\vert f(\state,\minaction,\maxaction)-f(\state,\minaction^j,\maxaction^j)\right\vert \leq L_f\eta_2\\
       \nonumber \implies &\left\vert f(\state,\minaction,\maxaction)\right\vert \leq \left|f(\state,\minaction^j,\maxaction^j)\right| + L_f\eta_2\\
        \implies &\sup_{\minaction,\maxaction}f(\state,\minaction,\maxaction)^2 \leq 2\max_jf(\state,\minaction^j,\maxaction^j)^2 + 2L_f^2\eta_2^2
    \end{align}
    Now we bound $\max_jf(\state,\minaction^j,\maxaction^j)^2$. We have for $(\minaction,\maxaction)$ in a radius $\nicefrac{\eta_2}{2}$ around $(\minaction^j,\maxaction^j)$, i.e., $(\minaction,\maxaction)\in\mathcal{B}\left((\minaction^j,\maxaction^j), \nicefrac{\eta_2}{2}\right)$, that
    \begin{align*}
        f(\state,\minaction,\maxaction) - f(\state,\minaction^j,\maxaction^j) &\geq -\frac{L_f\eta_2}{2}\\
        \implies f(\state,\minaction,\maxaction)^2&\geq \left(\max\left\{f(\state,\minaction^j,\maxaction^j) - \frac{L_f\eta_2}{2},0\right\}\right)^2\\
        &\geq \frac{f(\state,\minaction^j,\maxaction^j)^2}{2} - \frac{L_f^2\eta_2^2}{4}.
    \end{align*}
    Here we used that $\left(\max\{a-b,0\}\right)^2\geq \nicefrac{a^2}{2}-b^2$ for positive $a,b$.
Now, $\|f(\state,\minaction,\maxaction)\|_{2,p(\cdot,\cdot|s)}^2 \geq \int_{\mathcal{B}\left((\minaction^j,\maxaction^j), \nicefrac{\eta_2}{2}\right)}f(\state,\minaction,\maxaction)^2p(\minaction,\maxaction|\state)d(\minaction,\maxaction)$, meaning that $\exists ~c>0$ such that
\begin{align}\label{eq: propgate cover upper bound}
    &\nonumber\|f(\state,\minaction,\maxaction)\|_{2,p(\cdot,\cdot|s)}^2 \\
    &\geq \int_{\mathcal{B}\left((\minaction^j,\maxaction^j), \nicefrac{\eta_2}{2}\right)}f(\state,\minaction,\maxaction)^2p(\minaction,\maxaction|\state)d(\minaction,\maxaction)\geq c\volume\left(\frac{\eta_2}{2}\right) \left(\frac{f(\state,\minaction^j,\maxaction^j)^2}{2} - \frac{L_f^2\eta_2^2}{4}\right)\\
    &\implies f(\state,\minaction^j,\maxaction^j)^2 \leq \frac{2}{c\volume\left(\nicefrac{\eta_2}{2}\right)}\|f(\state,\minaction,\maxaction)\|_{2,p(\cdot,\cdot|s)}^2  + \frac{L^2_f\eta_2^2}{2}.
\end{align}
Combining \Cref{eq: propgate cover lower bound,eq: propgate cover upper bound}, we get that
\begin{align*}
\sup_{\minaction,\maxaction}f(\state,\minaction,\maxaction)^2 \leq \frac{4}{c\volume\left(\nicefrac{\eta_2}{2}\right)}\|f(\state,\minaction,\maxaction)\|_{2,p(\cdot,\cdot|s)}^2 + 3L_f^2\eta_2^2\\
\implies \ee_{\state\sim\samplingdist_\statespace}\left[\sup_{\minaction,\maxaction}f(\state,\minaction,\maxaction)^2\right] \leq \frac{4}{c\volume\left(\nicefrac{\eta_2}{2}\right)}\|f(\state,\minaction,\maxaction)\|_{2,\sigma}^2 + 3L_f^2\eta_2^2.
\end{align*}
This completes the proof.
\end{proof}
We shall use \Cref{lemma: 2 norm to sup} to analyze error propagation across iterations. We define $\dk:=\qfunc^*-\qfunck$. Note that $\dkone=\qfunc^*-\qfunckone = \bellman\qfunc^*-\bellman\qfunckone-\error_k$. Thus
\begin{align*}
    \dkone(\state,\minaction,\maxaction) = \left(\bellman\qfunc^*-\bellman\qfunck\right)(\state,\minaction,\maxaction) - \error_k(\state,\minaction,\maxaction).
\end{align*}
Let $e_k(\state):=\sup_{\minaction,\maxaction}\left|\dk(\state,\minaction,\maxaction)\right|$. Then, we have
\begin{align*}
    \forall~s\in\statespace,~e_{k+1}(\state)\leq \df \sup_{\minaction,\maxaction}\ee_{\state'\sim\transitionkernel(\cdot|\state,\minaction,\maxaction)}\left[e_k(\state')\right] + \sup_{\minaction,\maxaction}\left|\error_k(\state,\minaction,\maxaction)\right|.
\end{align*}
At step $K$, for $\st{0}\sim\samplingdist_\statespace$, we have
\begin{align*}
    &e_{K}(\st{0})\leq \df \sup_{\ut{0},\vt{0}}\ee_{\st{1}\sim\transitionkernel(\cdot|\state,\ut{0},\vt{0})}\left[e_{K-1}(\st{1})\right] + \sup_{\minaction,\maxaction}\left|\error_{K-1}(\st{0},\minaction,\maxaction)\right|\\
    &\leq \df \sup_{\ut{0},\vt{0}}\ee_{\st{1}}\left[\df\sup_{\ut{1},\vt{1}}\ee_{\st{2}\sim\transitionkernel(\cdot|\st{1},\ut{1},\vt{1})}\left[e_{K-2}(\st{2})\right] + \sup_{\minaction,\maxaction}\left|\error_{K-2}(\st{1},\minaction,\maxaction)\right|\right]\\
    &\quad\quad\quad\quad\quad\quad\quad\quad\quad\quad\quad\quad\quad\quad\quad+ \sup_{\minaction,\maxaction}\left|\error_{K-1}(\st{0},\minaction,\maxaction)\right|\\
    &\leq \df^2 \sup_{\ut{0},\vt{0}}\ee_{\st{1}}\left[\sup_{\ut{1},\vt{1}}\ee_{\st{2}}\left[e_{K-2}(\st{2})\right]\right] + \df\sup_{\ut{0},\vt{0}}\ee_{\st{1}}\left[\sup_{\minaction,\maxaction}\left|\error_{K-2}(\st{1},\minaction,\maxaction)\right|\right]\\
&\quad\quad\quad\quad\quad\quad\quad\quad\quad\quad\quad\quad\quad\quad\quad+ \sup_{\minaction,\maxaction}\left|\error_{K-1}(\st{0},\minaction,\maxaction)\right|\\
    &\qquad\qquad\qquad\qquad\qquad\qquad\qquad\qquad\qquad\qquad\vdots\\
    &\leq \df^K \sup_{\substack{\ut{0},\\ \vt{0}}}\ee_{\st{1}}\Bigg[\sup_{\substack{\ut{1},\\ \vt{1}}}\ee_{\st{2}}\Big[\cdots \sup_{\substack{\ut{K-1},\\ \vt{K-1}}} \ee_{\st{K}}\left[e_{0}(\st{K})\right]\Big]\Bigg]\\
    &\quad\quad\quad\quad+ \sum_{k=0}^{K-2}\df^{K-1-k} \sup_{\substack{\ut{0},\\ \vt{0}}}\ee_{\st{1}}\left[\cdots\sup_{\substack{\ut{K-2-k},\\ \vt{K-2-k}}}\ee_{\st{K-1-k}}\left[\sup_{\minaction,\maxaction}\left|\error_{k}(\st{K-1-k},\minaction,\maxaction)\right|\right]\right]\\
    &\quad\quad\quad\quad+\sup_{\minaction,\maxaction}\left|\error_{K-1}\left(\state_{(0)},\minaction,\maxaction\right)\right|
\end{align*}

Consider the term $e_0$. We have $\forall~\state\in\statespace, e_0(\state)\leq \|D_k\|_\infty\leq 2\qmax$. For the second term, we have from \Cref{assumption: discounted conc coeff} that
\begin{align*}
    &\ee_{\st{0}\sim\samplingdist_\statespace}\left[ \sup_{\substack{\ut{0},\\ \vt{0}}}\ee_{\st{1}}\left[\cdots\sup_{\substack{\ut{K-2-k},\\ \vt{K-2-k}}}\ee_{\st{K-1-k}}\left[\sup_{\minaction,\maxaction}\left|\error_{k}(\st{K-1-k},\minaction,\maxaction)\right|\right]\right]\right] \\&
    \leq c_{K-1-k} \ee_{\state\sim\samplingdist_\statespace}\left[\sup_{\minaction,\maxaction}\left|\error_{k}(\state,\minaction,\maxaction)\right|\right]\\
    &\leq c_{K-1-k} \sqrt{\ee_{\state\sim\samplingdist_\statespace}\left[\sup_{\minaction,\maxaction}\error_{k}(\state,\minaction,\maxaction)^2\right]}, \tag{Jensen's inequality $(\ee[X])^2\leq \ee[X^2]$} = c_{K-1-k}\bar{\error}_k
\end{align*}
where we write $c_j:=c_j(\samplingdist_\statespace)$. Thus, we have
\begin{align*}
    \ee_{\state\sim\samplingdist}\left[e_K(\state)\right] \leq \df^K\cdot2\qmax + \sum_{k=0}^{K-1}\df^{K-1-k}c_{K-1-k}\bar{\error}_k
\end{align*}

Now we have for $\state\in\statespace$, $\minaction,\minaction'\in\actionspacemin$ and $\maxaction,\maxaction'\in\actionspacemax$
\begin{align*}
    \left|\error_k(\state,\minaction,\maxaction)-\error_k(\state,\minaction',\maxaction')\right| &\leq \left|\qfunck(\state,\minaction',\maxaction') - \qfunck(\state,\minaction,\maxaction) \right| + \left|\bellman\qfunck(\state,\minaction',\maxaction') - \bellman\qfunck(\state,\minaction,\maxaction) \right|\\
    &\leq L_\mathcal{F}\|(\minaction,\maxaction)-(\minaction',\maxaction')\|_2 +  \left|\rwdfunc(\state,\minaction',\maxaction') - \rwdfunc(\state,\minaction,\maxaction) \right| \\
    &\quad+ \df \left|\int\min_{\tilde \minaction\in\actionspacemin}\max_{\tilde\maxaction\in\actionspacemax}\qfunck(\state',\tilde\minaction,\tilde\maxaction)\Big(\transitionkernel(d\state'|\minaction,\maxaction)-\transitionkernel(d\state'|\minaction',\maxaction')\Big)\right|\\
    &\leq \underbrace{\sqrt{2}\left(L_\mathcal{F}+L_\rwdfunc + 2\df\qmax L_\transitionkernel\right)}_{:=L_\error}\|(\minaction,\maxaction)-(\minaction',\maxaction')\|_2,
\end{align*}
implying that $\error_k$ is $L_\error:=\sqrt{2}\left(L_\mathcal{F}+L_\rwdfunc + 2\df\qmax L_\transitionkernel\right)$-Lipschitz in the joint actions. Thus from \Cref{lemma: 2 norm to sup}, we have for $\bar{\error}_k=\|\sup_{\minaction,\maxaction}|\error_k(\state,\minaction,\maxaction)|\|_{2,\samplingdist}$ for $\eta_2>0$
\begin{align*}
    \bar{\error}_k^2 &\leq \frac{4}{c\volume\left(\nicefrac{\eta_2}{2}\right)}\|\error_k\|_{2,\samplingdist}^2 + 3L_\error^2\eta_2^2\\
    \implies \bar{\error}_k &\leq \frac{2}{\sqrt{c\volume\left(\nicefrac{\eta_2}{2}\right)}}\|\error_k\|_{2\samplingdist} + \sqrt{3}L_\error\eta_2.
\end{align*}
Thus, for $\delta>0$, from \Cref{lemma: per iteration error bound}, using the union bound for $k$ iterations, we have with probability at least $1-\delta$ 
\begin{align*}
    &\forall~k\in[\numiters],\bar{\error}_k\leq \\
    &
    \underbrace{\sqrt{3}L_\error\eta_2 +\frac{2}{\sqrt{c\volume\left(\nicefrac{\eta_2}{2}\right)}} \left(\gamma\epsilon + \sqrt{\delapprox^2+ 5\Delta_n\left(\eta_1,\frac{\delta}{3\numiters}\right) + 4\qmax^2\sqrt{\frac{\log\left(\nicefrac{3\numiters}{\delta}\right)}{2n}} + 6\qmax\eta_1}\right)}_{:=b}
\end{align*}
This yields that with probability at least $1-\delta$
\begin{align*}
    \ee_{\state\sim\samplingdist}\left[e_\numiters(\state)\right] &\leq \df^\numiters\cdot2\qmax + \sum_{k=0}^{\numiters-1}\df^{\numiters-1-k}c_{\numiters-1-k} \cdot b\\
    &\leq \df^\numiters\cdot2\qmax + \frac{C_{\samplingdist_{\statespace}}}{1-\df} b\\
    \implies \ee_{\state\sim\rho}\left[e_\numiters(\state)\right] &\leq \df^\numiters\cdot2\qmax C_{\rho,\samplingdist} + C_{\samplingdist_{\statespace}} \frac{C_{\rho,\samplingdist}C_{\samplingdist_{\statespace}}}{1-\df} b, \tag{$\nicefrac{d\rho}{d\samplingdist}\leq C_{\rho,\samplingdist}$}
\end{align*}
where in the last step we used $\ee_\rho[g] \leq C_{\rho,\samplingdist}\ee_\samplingdist[g]$ for non-negative $g$ by the definition of $C_{\rho,\samplingdist}$ from the statement of \Cref{theorem: fqi bounds}. This completes the proof for \Cref{theorem: fqi bounds}.

\section{Proof of \Cref{corollary: nash gap}}\label{appendix: nash gap proof}
    Let us denote the true security strategies corresponding to $\qfunc\in\mathcal{F}$ as $\pi^*_{\qfunc}(\state)\in\arg\min_{\minaction\in\actionspacemin}\max_{\maxaction\in\actionspacemax}\qfunc(\state,\minaction,\maxaction)$ and $\mu^*_{\qfunc}(\state)\in\arg\max_{\maxaction\in\actionspacemax}\min_{\minaction\in\actionspacemin}\qfunc(\state,\minaction,\maxaction)$. We will first bound the entity $\max_{\mu'}\valfunc^{\pi_{\qfunc_\numiters},\mu'}-\valfunc^*$. Consider any state $\state$ and Markov policy $\mu'$. Let $e_K(\state):=\sup_{\minaction\in\actionspacemin,\maxaction\in\actionspacemax}\left|\qfunc_\numiters(\state,\minaction,\maxaction)-\qfunc^*(\state,\minaction,\maxaction)\right|$. 
    Now consider the difference $\tilde{\Delta}(\state):=\valfunc^{\pi_{\qfunc_\numiters},\mu'}(\state)-\valfunc^*(\state)=\valfunc^{\pi_{\qfunc_\numiters},\mu'}(\state)-\qfunc^*(\state,\pi_{\qfunc_\numiters},\mu'(\state))+\qfunc^*(\state,\pi_{\qfunc_\numiters},\mu'(\state))-\valfunc^*(\state)$. From the definition of $\valfunc$ and $\qfunc^*$ we have
    \begin{align*}
        \valfunc^{\pi_{\qfunc_\numiters},\mu'}(\state)-\qfunc^*(\state,\pi_{\qfunc_\numiters},\mu'(\state)) &= \df \ee_{\state_1\sim\transitionkernel(\cdot|\state,\pi_{\qfunc_\numiters}(\state),\mu'(\state))}\left[\valfunc^{\pi_{\qfunc_\numiters},\mu'}(\state_1) - \valfunc^*(\state_1)\right]\\
        &=\df \ee_{\state_1\sim\transitionkernel(\cdot|\state,\pi_{\qfunc_\numiters}(\state),\mu'(\state))}\left[\tilde\Delta(\state_1)\right]\\
    \end{align*}
    Thus we have the recursion
    \begin{align*}
        \tilde{\Delta}(\state) = \qfunc^*(\state,\pi_{\qfunc_\numiters},\mu'(\state))-\valfunc^*(\state) + \df \ee_{\state_1\sim\transitionkernel(\cdot|\state,\pi_{\qfunc_\numiters}(\state),\mu'(\state))}\left[\tilde\Delta(\state_1)\right].
    \end{align*}
    This yields
    \begin{align}\label{eq: nash gap first half}
        \valfunc^{\pi_{\qfunc_\numiters},\mu'}(\state)-\valfunc^*(\state)&=\sum_{t=0}^\infty\df^t\ee_{\state_t|(\state_0=\state,\pi_{\qfunc_\numiters},\mu')}\left[\qfunc^*(\state_t,\pi_{\qfunc_\numiters}(\state_t),\mu'(\state_t))-\valfunc^*(\state_t)\right],
    \end{align}
    where we use $\ee_{\state_t|(\state_0=\state,\pi_{\qfunc_\numiters},\mu')}$ as shorthand to denote the iterated expectation over the states from $\state_0$ to $\state_t$ under player policies $\pi_{\qfunc_\numiters}$ and $\mu'$. To bound $\qfunc^*(\state,\pi_{\qfunc_\numiters}(\state),\mu'(\state))-\valfunc^*(\state)$.
    We have that 
    \begin{align*}
        \qfunc^*(\state,\pi_{\qfunc_\numiters}(\state),\mu'(\state)) &\leq \qfunc^*(\state,\pi_{\qfunc_\numiters},\tilde \maxaction),~\tilde \maxaction\in\arg\max_{\maxaction\in\actionspacemax}\qfunc^*(\state,\pi_{\qfunc_\numiters},\maxaction)\\
        &\leq \qfunc_\numiters(\state,\pi_{\qfunc_\numiters},\tilde \maxaction) + e_K(\state)\\
        &\leq \max_{\maxaction\in\actionspacemax}\qfunc_\numiters(\state,\pi_{\qfunc_\numiters},\maxaction) +e_K(\state)\\
        &\leq \min_{\minaction\in\actionspacemin}\qfunc_\numiters(\state,\minaction,\mu_{\qfunc_\numiters}) +e_K(\state) + \epsilon\\
        &\leq \qfunc_\numiters(\state,\tilde \minaction,\mu_{\qfunc_\numiters}) +e_K(\state) + \epsilon, \tilde\minaction\in\arg\min_{\minaction\in\actionspacemin}\qfunc^*(\state,\minaction, \mu_{\qfunc_\numiters})\\
        &\leq \qfunc^*(\state,\tilde \minaction,\mu_{\qfunc_\numiters}) +2e_K(\state) + \epsilon\\
        &= \min_{\minaction\in\actionspacemin}\qfunc^*(\state,\minaction,\mu_{\qfunc_\numiters}) +2e_K(\state) + \epsilon\\
        &\leq \underbrace{\max_{\maxaction\in\actionspacemax}\min_{\minaction\in\actionspacemin}\qfunc^*(\state,\minaction,\maxaction)}_{=\valfunc^*(\state)} +2e_K(\state) + \epsilon
    \end{align*}
    Using this in \Cref{eq: nash gap first half}, we get
    \begin{align*}
        \valfunc^{\pi_{\qfunc_\numiters},\mu'}(\state)-\valfunc^*(\state)\leq\sum_{t=0}^\infty\df^t\ee_{\state_t|(\state_0=\state,\pi_{\qfunc_\numiters},\mu')}\left[2e_K(\state_t)+\epsilon\right],
    \end{align*}
    From \Cref{assumption: discounted conc coeff}, we have $\ee_{\state\sim\rho}\left[e_K(\state_t)\right] \leq c_t(\rho) \ee_{\state\sim\samplingdist_\statespace}\left[e_K(\state_t)\right]$. Thus
    \begin{align*}
       \sum_{t=0}^\infty\df^t\ee_{\state\sim\rho}\left[2e_K(\state_t)+\epsilon\right] &\leq \sum_{t=0}^{\infty}\df^t\left(2c_t(\rho)\ee_{\state\sim\sigma_\statespace}\left[e_K(\state)\right]+ \epsilon\right)\\
       &\leq \frac{2C_\rho}{1-\df}\ee_{\state\sim\sigma_\statespace}\left[e_K(\state)\right] +\frac{\epsilon}{1-\df}.
    \end{align*}
Thus we have
\begin{align}
   \nonumber \ee_{\state\sim\rho}\left[\valfunc^{\pi_{\qfunc_\numiters},\mu'}(\state)-\valfunc^*(\state)\right]\leq \frac{2C_\rho}{1-\df}\ee_{\state\sim\sigma_\statespace}\left[e_K(\state)\right] +\frac{\epsilon}{1-\df}\\
    \implies\ee_{\state\sim\rho}\left[\max_{\mu'}\valfunc^{\pi_{\qfunc_\numiters},\mu'}(\state)-\valfunc^*(\state)\right]\leq \frac{2C_\rho}{1-\df}\ee_{\state\sim\sigma_\statespace}\left[e_K(\state)\right] +\frac{\epsilon}{1-\df}. \label{eq: nash gap term 1}
\end{align}
Similarly, for a policy $\pi'$, let us define $\tilde{\Delta}'(\state):=\valfunc^*(\state)-\valfunc^{\pi',\mu_{\qfunc_\numiters}}(\state)=\valfunc^*(\state)-\qfunc^*(\state,\pi',\mu_{\qfunc_\numiters})+\qfunc^*(\state,\pi',\mu_{\qfunc_\numiters}) -\valfunc^{\pi',\mu_{\qfunc_\numiters}}(\state)$. From the definition of $\valfunc^*,\qfunc^*$,
\begin{align*}
    \qfunc^*(\state,\pi',\mu_{\qfunc_\numiters}) -\valfunc^{\pi',\mu_{\qfunc_\numiters}}(\state) &= \df \ee_{\state_1\sim\transitionkernel(\cdot|\state,\pi_{\qfunc_\numiters},\mu')}\left[\valfunc^*(\state_1)-\valfunc^{\pi',\mu_{\qfunc_\numiters}}(\state_1)\right]\\
    &=\df \ee_{\state_1\sim\transitionkernel(\cdot|\state,\pi_{\qfunc_\numiters},\mu')}\left[\tilde \Delta '(\state_1)\right].
\end{align*}
This gives the recursion
\begin{align*}
    \tilde{\Delta}'(\state) = \valfunc^*(\state)-\qfunc^*(\state,\pi',\mu_{\qfunc_\numiters}) + \df \ee_{\state_1\sim\transitionkernel(\cdot|\state,\pi_{\qfunc_\numiters},\mu')}\left[\tilde \Delta '(\state_1)\right],
\end{align*}
which yields
\begin{align}\label{eq: nash gap second half}
    \valfunc^*(\state)-\valfunc^{\pi',\mu_{\qfunc_\numiters}}(\state)=\sum_{t=0}^\infty\df^t\ee_{\state_t|(\state_0=\state,\pi_{\qfunc_\numiters},\mu')}\left[\valfunc^*(\state_t)-\qfunc^*(\state_t,\pi'(\state_t),\mu_{\qfunc_\numiters}(\state_t))\right]
\end{align}
To bound $\valfunc^*(\state)-\qfunc^*(\state,\pi'(\state),\mu_{\qfunc_\numiters}(\state))$, similar to before we have
\begin{align*}
    \qfunc^*(\state,\pi'(\state),\mu_{\qfunc_\numiters}(\state)) &\geq \qfunc^*(\state,\tilde \minaction,\mu_{\qfunc_\numiters}(\state)), \tilde\minaction\in\arg\min_{\minaction}\qfunc^*(\state,\minaction,\mu_{\qfunc_\numiters}(\state))\\
    &\geq \qfunc_\numiters(\state,\tilde \minaction,\mu_{\qfunc_\numiters}(\state))-e_K(\state)\\
    &\geq \min_{\minaction\in\actionspacemin}\qfunc_\numiters(\state, \minaction,\mu_{\qfunc_\numiters}(\state))-e_K(\state)\\
    &\geq \max_{\maxaction\in\actionspacemax}\qfunc_\numiters(\state,\pi_{\qfunc_\numiters},\maxaction) - e_K(\state)-\epsilon\\
    &\geq \qfunc_\numiters(\state,\pi_{\qfunc_\numiters},\tilde \maxaction) - e_K(\state)-\epsilon, ~\tilde\maxaction\in\arg\max_{\maxaction\in\actionspacemax}\qfunc^*(\state,\pi_{\qfunc_\numiters}(\state),\maxaction)\\
    &\geq \qfunc^*(\state,\pi_{\qfunc_\numiters},\tilde \maxaction) - 2e_K(\state)-\epsilon\\
    &= \max_{\maxaction\in\actionspacemax}\qfunc^*(\state,\pi_{\qfunc_\numiters},\maxaction) - 2e_K(\state)-\epsilon\\
    &\geq \underbrace{\min_{\minaction\in\actionspacemin}\max_{\maxaction\in\actionspacemax}\qfunc^*(\state,\minaction,\maxaction)}_{:=\valfunc^*(\state)} - 2e_K(\state)-\epsilon\\
    \implies \valfunc^*(\state) &\leq \qfunc^*(\state,\pi'(\state),\mu_{\qfunc_\numiters}(\state)) + 2e_K(\state)+\epsilon.
\end{align*}
Using in \Cref{eq: nash gap second half}, we get
\begin{align*}
     \valfunc^*(\state)-\valfunc^{\pi',\mu_{\qfunc_\numiters}}(\state)\leq\sum_{t=0}^\infty\df^t\ee_{\state_t|(\state_0=\state,\pi_{\qfunc_\numiters},\mu')}\left[2e_K(\state_t)+\epsilon\right].
\end{align*}
From similar analysis as previously, this yields for $\state\sim\rho$
\begin{align}
    \ee_{\state\sim\rho}\left[\valfunc^*(\state)-\min_{\pi'}\valfunc^{\pi',\mu_{\qfunc_\numiters}}(\state)\right]\leq\frac{2C_\rho}{1-\df}\ee_{\state\sim\sigma_\statespace}\left[e_K(\state)\right] +\frac{\epsilon}{1-\df} \label{eq: nash gap term two}
\end{align}
Combining \Cref{eq: nash gap term 1,eq: nash gap term two}, we get for  $\state\sim\rho$
\begin{align*}
  \ee_{\state\sim\rho} \left[ \max_{\mu'}\valfunc^{\pi_{\qfunc_\numiters},\mu'}(\state)-\min_{\pi'}\valfunc^{\pi',\mu_{\qfunc_\numiters}}(\state)\right]&\leq\frac{4C_\rho}{1-\df}\ee_{\state\sim\sigma_\statespace}\left[e_K(\state)\right] +\frac{2\epsilon}{1-\df}\\
    \implies \texttt{NashGap}_\rho(\pi_{\qfunc_\numiters},\mu_{\qfunc_\numiters})&\leq\frac{4C_\rho}{1-\df}\ee_{\state\sim\sigma_\statespace}\left[\sup_{\minaction\in\actionspacemin,\maxaction\in\actionspacemin}|\qfunc_K-\qfunc^*|\right] +\frac{2\epsilon}{1-\df}
\end{align*}

\section{On Partially Input Convex-Concave Neural Network Architecture} \label{appendix: convex concave fa piccnn}

In this section, we describe the architecture underlying the function class $\mathcal{F}$ of
Section~\ref{section: convex concave fa}. Throughout, we use the softplus action function $\softmax(x) := \log(1+e^{x})$ applied entrywise to vectors and matrices. Softplus satisfies the PICNN requirement of being convex and non-decreasing in the input required for guaranteeing desirable convexity in the partial input \citep{amos2017input}. We first review the PICNN architecture.

\paragraph{Partially Input Convex Neural Networks \citep{amos2017input}} Fix $i \in \{1,2\}$ and write $(\x,\y) = (\state,\minaction)$ for $i=1$ and $(\x,\y)=(\state,\maxaction)$ for
$i=2$. For convenience, we assume both blocks $N^i_{\params^i}$, $i\in\{1,2\}$ have hidden width $m$ for the convex path corresponding to $\y$ and $p$ for the state path corresponding to $\x$, and depth $L$. All results presented here follow for different, network specific choices of $m,p$ as well. The architecture involves two recursions, one related with the \emph{state} path, denoted by $ h^i$, and the other related with the \emph{convex} path, denoted by $ z^i$, give by
\begin{equation}\label{eq:picnn}
    \begin{aligned}
        h^i_0 := \bm x, \qquad
 h^i_{\ell+1}& = \varphi\Big(\widetilde W^i_\ell\, h^i_\ell + \widetilde b^i_\ell\Big),
\\
z^i_0=\bm 0,\quad z^i_{\ell+1} = \varphi\Bigg(
  W^{(z),i}_\ell \Big(z^i_\ell \odot &\varphi\big(W^{(zh),i}_\ell h^i_\ell + b^{(z),i}_\ell\big)\Big)
  ~+\\
  W^{(y),i}_\ell \Big(\bm y \odot& \big(W^{(yh),i}_\ell\, h^i_\ell + b^{(y),i}_\ell\big)\Big)
  +
  W^{(h),i}_\ell\, h^i_\ell + b^i_\ell
\Bigg),\\
N^i(\x,\y)=(w^i)^\top z^i_L, &\quad\text{where}~\ell=0,\dots,L-1.
    \end{aligned}
\end{equation}
Convexity of $y \mapsto N^i_{\theta_i}(\x,\y)$ requires $W^{(z),i}_\ell$ and $w^i$ to
have non-negative entries \citep{amos2017input}. This can be done in multiple ways, for example, by projecting the $W^i_\ell$ and $w^i$ parameters to the non-negative set while training. In our work, to ensure smoothness, we enforce non-negativity through unconstrained reparameterization where one can optimize over free parameters
$\tilde {W}^{(z),i}_\ell \in \mathbb{R}^{m\times m}$ and
$\widetilde w^i \in \mathbb{R}^{m}$ and set
\begin{equation}\label{eq:reparam}
  {W}^{(z),i}_\ell = \varphi\big(\tilde {W}^{(zh),i}_\ell\big),
  \qquad
  w^i = \varphi\big(\widetilde w^i\big)
\end{equation}
\Cref{eq:picnn} implies that the network $N^i$ has
\begin{align}
    \begin{aligned}
    |\params^i|:=&L(m^2+md_{\y}+2m+d_{\y})+m+pd_{\state}+(p^2+2mp+pd_{\y})(L-1)+pL\\
    & +2md_{\state}+d_{\y}d_{\state}
    \end{aligned}
     \label{eq: param count picnn}
\end{align}
parameters, with $d_{\y}=d_{\minaction}$ for $i=1$ and $d_{\y}=d_{\maxaction}$ for $i=2$.
We use this partially input convex $N^i$ to construct our $\qfunc$-function, as described in \Cref{eq: convex concave q}, which we restate here
\begin{equation}\label{eq:Qapp}
  \qfunc_\params(\state,\minaction,\maxaction)
  = N^1_{\params^1}(\state,\minaction) - N^2_{\params^2}(\state,\maxaction) + \minaction^\top M_{\params^3}(\state)\, \maxaction,
  \qquad \params = (\params^1,\params^2,\params^3) \in \Theta.
\end{equation}
Here, $\Theta$ is a convex set with $\|\params\|_2\leq B_\Theta~\forall~\params\in\Theta$. We take $M(\state)$ as a standard feedforward network given by
\begin{align}\label{eq: bilinear net in q arch}
    h^M_0=\state,\quad h^M_{\ell+1}=\varphi(W_\ell^M h_{\ell}^M + b_\ell^M), ~\ell=0,\cdots,L_M-1, \quad\textrm{vec}(M):=W_{\text{out}} h^M_{L_M},
\end{align}
where we take the hidden width as $p_M$. For convenience, we take the activation functions for $M$ to be the same as $N^i$, though $M$ can have any $1$-Lipschitz activation alternatives such as ReLU.
Our construction exploits two elementary properties of the softplus function $\varphi$. \Cref{eq: bilinear net in q arch} implies that $M$ has
\begin{align}
    |\params^3|=p_ML_M + (L_M-1)p_M^2 + p_Md_{\state} + p_Md_{\minaction}d_{\maxaction} \label{eq: param count bilinear}
\end{align}
parameters. Thus, $\qfunc_{\params}$ has $|\params^1|+|\params^2|+|\params^3|$ total parameters.
\begin{lemma}\label{lem:spl}
Let $\varphi(t) = \log(1+e^{t})$. Then
\begin{enumerate}
  \item[(i)] $\varphi$ is
    $1$-Lipschitz, strictly increasing and convex;
  \item[(ii)] for every $\bm a \in \mathbb{R}^{m}$, with $\varphi$ applied
    entrywise, $\;\|\varphi(\bm a)\|_2 \le \|\bm a\|_2 + \sqrt{m}\log 2$.
\end{enumerate}
\end{lemma}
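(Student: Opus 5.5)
Both parts reduce to elementary calculus on the scalar function $\varphi(t)=\log(1+e^{t})$, followed by a norm inequality for the vector case.

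For (i), I will compute the first two derivatives. The first is $\varphi'(t)=\frac{e^{t}}{1+e^{t}}=\frac{1}{1+e^{-t}}$, the logistic sigmoid, which lies strictly in $(0,1)$ for every real $t$. Positivity of $\varphi'$ gives strict monotonicity. The bound $|\varphi'|\le 1$, together with the mean value theorem, gives $|\varphi(t)-\varphi(t')|\le|t-t'|$, i.e.\ $1$-Lipschitzness. The second derivative is $\varphi''(t)=\varphi'(t)\big(1-\varphi'(t)\big)>0$, which yields convexity.

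For (ii), I will first prove the scalar bound $|\varphi(t)|\le |t|+\log 2$, where $\varphi(t)>0$ so the absolute value on the left is harmless. I split by the sign of $t$.
\begin{itemize}
\item If $t\le 0$, monotonicity gives $\varphi(t)\le\varphi(0)=\log 2$.
\item If $t\ge 0$, write $\varphi(t)=t+\log(1+e^{-t})\le t+\log 2$.
\end{itemize}
Equivalently, this is the $1$-Lipschitz bound from (i) applied as $|\varphi(t)-\varphi(0)|\le|t|$. Either way, $0<\varphi(a_j)\le |a_j|+\log 2$ holds entrywise.

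To pass to vectors, let $\bm 1$ denote the all-ones vector in $\mathbb{R}^m$. The entrywise bound says the vector $\varphi(\bm a)$ is dominated componentwise, in absolute value, by the vector with entries $|a_j|+\log 2$. Monotonicity of the Euclidean norm under componentwise domination of absolute values, followed by the triangle inequality, gives
\[
\|\varphi(\bm a)\|_2\le\big\|\,(|a_j|)_{j}+\log 2\,\bm 1\big\|_2\le\|\bm a\|_2+\log 2\,\|\bm 1\|_2=\|\bm a\|_2+\sqrt{m}\log 2 .
\]
A shorter equivalent route is $\|\varphi(\bm a)-\varphi(\bm 0)\|_2\le\|\bm a\|_2$ by entrywise $1$-Lipschitzness, combined with $\|\varphi(\bm 0)\|_2=\sqrt m\log 2$.

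There is no real obstacle here. The only point that needs care is making the componentwise-to-norm step explicit.
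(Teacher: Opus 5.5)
Your proposal is correct and matches the paper's argument: part (i) via $\varphi'(t)=e^{t}/(1+e^{t})\in(0,1)$ and $\varphi''>0$, and part (ii) via the entrywise $1$-Lipschitz bound $|\varphi(a_j)-\log 2|\le|a_j|$ combined with the triangle inequality and $\|(\log 2)\bm 1\|_2=\sqrt{m}\log 2$, which is exactly the ``shorter equivalent route'' you mention. Your primary sign-split derivation of $\varphi(t)\le|t|+\log 2$ followed by componentwise domination is also valid and reaches the same bound.
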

 
\begin{proof}
(i) is immediate from $\varphi'(t) = e^t/(1+e^t) \in (0,1)$, and convexity
follows from $\varphi'' > 0$. For (ii), let $\bm{1} \in \mathbb{R}^m$ be the
all-ones vector and note $\varphi(0) = \log 2$. By (i) each coordinate obeys
$|\varphi(a_j) - \log 2| \le |a_j|$, so
$\|\varphi(\bm a) - (\log 2)\bm{1}\|_2^2 = \sum_j (\varphi(a_j)-\varphi(0))^2
\le \sum_j a_j^2 = \|\bm a\|_2^2$. The triangle inequality gives
$\|\varphi(\bm a)\|_2 \le \|\varphi(\bm a)-(\log2)\bm{1}\|_2 + \|(\log2)\bm{1}\|_2
\le \|\bm a\|_2 + \sqrt m \log 2$.
\end{proof}

The following assumption will be required to establish output and Lipschitz bounds, and is standard in neural network literature.

\begin{assumption}[Bounded inputs and weights]\label{ass:bdd}
 There exist finite positive constants $B_W, B_b, B_w, B_M$ and $D_{\statespace}$ such that:
 \begin{enumerate}
     \item For any weight matrix $W$ and bias vector $b$ in \Cref{eq:picnn}, the spectral norm $\|W\|_2\leq B_W$ and $\ell^2$-norm $\|b\|_2\leq B_b$.
     \item The output weights $w^i$ satisfy $\|w^i\|_2 \leq B_w$.
 \end{enumerate} 
 \end{assumption}
\Cref{ass:bdd} allows us to establish the Lipschitzness (with respect to the convex input) of a single PICNN $N^i$ as described in \Cref{eq:picnn}.
\subsection{Notation for Lipschitz and covering number bounds.}\label{subsection: appendix notation for picnn} Besides the constants in \Cref{ass:bdd}, we use several other notations throught the rest of this Section. For a scalar $t\in\real$, let the function $G(t,j)$ denote the sum $G(t,j)=\sum_{k=0}^j t^k$. From \cref{eq:picnn}, for $N^i$, we refer to the pre-activation layer of the convex path as $a^i_{\ell+1}$, i.e., $z^i_{\ell+1}=\varphi(a^i_{\ell+1})$, and we denote $D^i_{\ell+1}:=\mathrm{diag}\left(\varphi'(a^i_{\ell+1})\right)$ for all $\ell\in\{0,\dots,L-1\}$. Similarly, for all $\ell\in\{0,\dots,L-1\}$, we refer to the pre-activation layer of the state path as $\tilde{a}^i_{\ell+1}:=\tilde{W}^i_\ell h^i_\ell+\tilde{b}^i_\ell$, and define $\tilde{D}^i_{\ell+1}:=\textrm{diag}\left(\varphi'(\tilde{a}^i_{\ell+1})\right)$. For each layer $\ell$ of the PICNN $N^i$, we denote the vectors dependent only on $\x$ as
    \begin{align*}
c^{(z)}_\ell := \varphi\big(W^{(zh),i}_\ell h^i_\ell + b^{(z),i}_\ell\big) \in \mathbb{R}^m_{\ge 0}, \qquad c^{(y)}_\ell \;:=\; W^{(yh),i}_\ell h^i_\ell + b^{(y),i}_\ell \in \mathbb{R}^{d_{\bm y}},
\end{align*} 
with $\|c^{(z)}_\ell\|_\infty,\|c^{(y)}_\ell\|_\infty\leq B_WH+B_b+\log2=:C$, where $H$ is defined in \Cref{lemma: individual picnn lipschitz}, and $B_W,B_b$ defined in \Cref{ass:bdd}. We also denote $\rho:=B_W(C)$, and $C_Z=B_WH+B_b+\sqrt{m}\log2$. Finally, $B_{\Theta}$ is a positive constant such that $\|\params\|_2\leq B_\Theta~\forall~\params\in\Theta$.

\subsection{Lipschitz Bounds on Partially Input Convex-Conave Neural Network}\label{subsection: appendix lipschitz actions bounds F}

\begin{lemma}[Lipschitzness of PICNN]\label{lemma: individual picnn lipschitz}
    Under \Cref{assumption: actions,ass:bdd}, a single PICNN $N^i_\params(\x,\y$) as described in \Cref{eq:picnn} is $L_N$-Lipschitz in the input $\bm y$, where
    \begin{align*}
        L_N = B_w \cdot \rho \cdot G(\rho,L-1)
    \end{align*}
    where $\rho=B_W(B_WH+B_b+\log 2)$ from \Cref{subsection: appendix notation for picnn} and $H$ is defined as
    \begin{align*}
     H=\max\left(D_\statespace,(B_W)^L D_{\statespace} + (B_b + \sqrt{p}\log 2)\cdot G(B_W, L)\right),
    \end{align*}
     and $D_\statespace=\sup_{\state\in\statespace}\|\state\|_2<\infty$ from \Cref{assumption: actions}.
\end{lemma}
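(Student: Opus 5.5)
Fix the state input $\x$ (so every $\x$-dependent quantity is frozen). Differentiate, or equivalently bound, the convex-path recursion in \Cref{eq:picnn} layer by layer in $\y$. Because the state path $h^i_\ell$ does not depend on $\y$, the two vectors $c^{(z)}_\ell$ and $c^{(y)}_\ell$ are constants for fixed $\x$. The pre-activation then reads
\[
a^i_{\ell+1} = W^{(z),i}_\ell\big(z^i_\ell\odot c^{(z)}_\ell\big) + W^{(y),i}_\ell\big(\y\odot c^{(y)}_\ell\big) + (\text{terms independent of }\y).
\]
The plan has three steps: bound $\|h^i_\ell\|_2$ uniformly, bound the constants $c^{(z)}_\ell,c^{(y)}_\ell$ entrywise, and run a linear recursion on the Lipschitz constant $\ell_k$ of $\y\mapsto z^i_k$.

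\textbf{Step 1 (state path).} Using \Cref{lem:spl}(ii) and \Cref{ass:bdd}, we get $\|h^i_{\ell+1}\|_2\le B_W\|h^i_\ell\|_2 + B_b + \sqrt p\log 2$. Unrolling from $\|h^i_0\|_2\le D_\statespace$ gives
\[
\|h^i_\ell\|_2\le B_W^\ell D_\statespace + (B_b+\sqrt p\log2)\,G(B_W,\ell-1).
\]
This is at most $H$ for all $\ell\le L$. The $\max$ with $D_\statespace$ in the definition of $H$ covers the case $B_W<1$, and I would check monotonicity in $\ell$ loosely, absorbing any slack into $G(B_W,L)$.

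\textbf{Step 2 (constants).} Next, $\|c^{(y)}_\ell\|_\infty\le\|W^{(yh)}_\ell h_\ell+b^{(y)}_\ell\|_2\le B_WH+B_b\le C$. For $c^{(z)}_\ell$, the scalar bound $\varphi(t)\le |t|+\log2$ gives, entrywise, $\|c^{(z)}_\ell\|_\infty\le B_WH+B_b+\log2=C$.

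\textbf{Step 3 (recursion).} Since $\varphi$ is $1$-Lipschitz, for $\y,\y'$ we have
\[
\|z_{\ell+1}-z'_{\ell+1}\|_2\le B_W C\,\|z_\ell-z'_\ell\|_2 + B_W C\,\|\y-\y'\|_2,
\]
using $\|u\odot c\|_2\le\|c\|_\infty\|u\|_2$. Hence $\ell_{k+1}\le\rho\,\ell_k+\rho$ with $\ell_0=0$, because $z_0=\bm 0$ does not depend on $\y$. Solving gives $\ell_L\le\rho\sum_{k=0}^{L-1}\rho^k=\rho\,G(\rho,L-1)$. Finally, $|N^i(\x,\y)-N^i(\x,\y')|\le\|w^i\|_2\,\ell_L\|\y-\y'\|_2\le B_w\rho\,G(\rho,L-1)\|\y-\y'\|_2$.

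\textbf{Main obstacle.} The main obstacle is Step 1, specifically matching the exact form of $H$, including the index offset $G(B_W,L)$ versus $G(B_W,L-1)$ and the role of the $\max$. I would settle it by taking a crude uniform upper bound over $\ell\in\{0,\dots,L\}$ of the unrolled expression. The rest is routine once the Hadamard products are handled with $\infty$-norm bounds on the fixed-$\x$ vectors.
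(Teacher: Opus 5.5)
Your proposal is correct and follows the paper's proof essentially step for step: the state-path bound giving $H$, the $\infty$-norm bounds on the $\x$-only vectors $c^{(z)}_\ell, c^{(y)}_\ell$, and the recursion $J_{\ell+1}\le\rho J_\ell+\rho$ with $J_0=0$; the only cosmetic difference is that you propagate the differences $\|z^i_\ell(\y)-z^i_\ell(\y')\|_2$ directly instead of bounding the Jacobian and invoking the mean value theorem. On the step you flag as the obstacle, the unrolled bound $f_\ell$ (with $f_0=D_\statespace$) satisfies the affine recursion $f_{\ell+1}=B_Wf_\ell+B_b+\sqrt p\log 2$, so it is monotone in $\ell$ and $\max_{0\le\ell\le L}f_\ell=\max(f_0,f_L)\le H$, whereas a crude term-by-term bound would overshoot $H$ when $B_W<1$.
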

\begin{proof}
We will first show that $\|h^i_\ell\|_2 \leq H$ for all $\ell \in [L]$ and $\state\in\statespace$. From \Cref{eq:picnn}, we have $\|h^i_0\|_2 =\|\state\|\leq D_{\statespace}$, and
    \begin{align*}
        \|h^i_{\ell+1}\|_2 = \left\|\varphi\Big(\widetilde W^i_\ell\, h^i_\ell + \widetilde b^i_\ell\Big)\right\|_2 \leq B_W\|h^i_\ell\|_2 + B_b + \sqrt{p}\log 2.
    \end{align*}
    Unrolling this yields $H$.
    For each layer $\ell$, considering the vectors dependent only on $\x$,
 we have $\|c^{(z)}_\ell\|_\infty\leq \|W^{(zh),i}_\ell h^i_\ell + b^{(z),i}_\ell\|_2\leq B_WH+B_b+\log(2)$. Similarly, $\|c^{(y)}_\ell\|_\infty\leq B_WH+B_b$. 
For fixed $\bm x$, the vectors $c^{(z)}_\ell$ and $c^{(y)}_\ell$ can be treated as constants independent of $y$. Then, the forward pass reduces to:
\begin{align*}
    z^i_{\ell+1} = \varphi\Big(W^{(z),i}_\ell\,\mathrm{diag}\left(c^{(z)}_\ell\right)z^i_\ell + W^{(y),i}_\ell\,\mathrm{diag}\left(c^{(y)}_\ell\right)\bm y+\text{constant in }\bm y\Big).
\end{align*}Let $J_\ell := \|\partial z^i_\ell / \partial \bm y\|_2$. Differentiating through $\varphi$, since $\varphi' \in (0,1)$:
$$J_{\ell+1} \leq \|W^{(z),i}_\ell\|_2 \|c^{(z)}_\ell\|_\infty J_\ell + \|W^{(y),i}_\ell\|_2 \|c^{(y)}_\ell\|_\infty, \qquad J_0 = 0.$$
Note that both $\|W^{(z),i}_\ell\|_2 \|c^{(z)}_\ell\|_\infty$ and $\|W^{(y),i}_\ell\|_2 \|c^{(y)}_\ell\|_\infty$ are less than $\rho:=B_W(B_WH+B_b+\log 2)$. Then, unrolling $J_{\ell+1} \leq \rho J_\ell + \rho$ from $J_0=0$ yields $J_L\leq\sum_{k=1}^L\rho^k=\rho\cdot G(\rho,L-1)$.
Since $N^i(\x,\y) = (w^i)^\top z^i_L$ and $\|\partial N^i/\partial y\|_2 \le \|w^i\|_2 J_L \le B_w J_L$, the mean value theorem gives $|N^i(\x,\y)-N^i(\x,\y')|\le B_w J_L \|\y-\y'\| = L_N\|\y-\y'\|$.
\end{proof}
We are now ready to show that the convex-concave neural network function approximation class introduced in \Cref{section: convex concave fa} is Lipschitz in the players' actions. 
\begin{theorem}[Lipschitzness of $\qfunc\in\mathcal{F}$] \label{theorem: lipschitz action bound for F}
Under \Cref{assumption: actions,ass:bdd}, any function belonging to the convex-concave neural network function class introduced in \Cref{section: convex concave fa} with architecture specified in \cref{eq:picnn,eq: bilinear net in q arch} is $L_\mathcal{F}$-Lipschitz in the players' joint actions, with
\begin{align*}
    L_\mathcal{F}=2L_N + B_M\left(D_\actionspacemin+D_\actionspacemax\right),
\end{align*}
where $L_N$ is defined in \Cref{lemma: individual picnn lipschitz}, $D_\actionspacemin,D_\actionspacemax$ are defined in \Cref{assumption: actions}, and $\|M_\theta(s)\|_2 \leq B_M$ for all $s\in\statespace$. Note that a bound for $B_M$ will be established later on in \Cref{theorem: covering number bound}.
\end{theorem}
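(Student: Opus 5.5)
The plan is to split $\qfunc_\params$ into its three summands from \Cref{eq:Qapp} and bound the Lipschitz constant of each in the joint action $(\minaction,\maxaction)$. Fix $\state\in\statespace$ and take two action pairs $(\minaction,\maxaction),(\minaction',\maxaction')\in\actionspacemin\times\actionspacemax$. The triangle inequality gives
\begin{align*}
|\qfunc_\params(\state,\minaction,\maxaction)-\qfunc_\params(\state,\minaction',\maxaction')|
&\leq |N^1_{\params^1}(\state,\minaction)-N^1_{\params^1}(\state,\minaction')| + |N^2_{\params^2}(\state,\maxaction)-N^2_{\params^2}(\state,\maxaction')|\\
&\quad+ |\minaction^\top M_{\params^3}(\state)\maxaction-\minaction'^\top M_{\params^3}(\state)\maxaction'|.
\end{align*}
The first two terms are handled by \Cref{lemma: individual picnn lipschitz} applied to each PICNN block. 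That lemma gives the bounds $L_N\|\minaction-\minaction'\|_2$ and $L_N\|\maxaction-\maxaction'\|_2$. Each of these is at most $L_N(\|\minaction-\minaction'\|_2+\|\maxaction-\maxaction'\|_2)$, so together they contribute at most $2L_N$ times the joint distance. Accepting this slack is exactly what produces the constant $2L_N$ in the statement.

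For the bilinear term, I will add and subtract $\minaction'^\top M(\state)\maxaction$:
\begin{align*}
|\minaction^\top M\maxaction-\minaction'^\top M\maxaction'| \leq |(\minaction-\minaction')^\top M\maxaction| + |\minaction'^\top M(\maxaction-\maxaction')| \leq B_M D_\actionspacemax\|\minaction-\minaction'\|_2 + B_M D_\actionspacemin\|\maxaction-\maxaction'\|_2.
\end{align*}
This step uses $\|M_{\params^3}(\state)\|_2\leq B_M$ together with the diameter bounds from \Cref{assumption: actions}. Bounding both coefficients by $B_M(D_\actionspacemin+D_\actionspacemax)$ then yields the stated constant $L_\mathcal{F}=2L_N+B_M(D_\actionspacemin+D_\actionspacemax)$. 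The constant does not depend on $\params$ or $\state$, so it holds uniformly over $\mathcal{F}$ in the sense of \Cref{assumption: FA}.

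I do not expect a real obstacle here. The only delicate point is the operator-norm bound $B_M$ on $M_{\params^3}(\state)$, which the statement takes as given and defers to \Cref{theorem: covering number bound}. If I wanted it here, I would bound $\|h^M_{L_M}\|_2$ by unrolling \Cref{lem:spl}(ii), as in the bound on $H$. I would then use $\|M\|_2\leq\|M\|_F=\|\textrm{vec}(M)\|_2\leq\|W_{\text{out}}\|_2\|h^M_{L_M}\|_2$.
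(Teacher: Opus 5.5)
Your proposal is correct and follows the paper's proof essentially step for step. Both use the same three-term triangle inequality, apply \Cref{lemma: individual picnn lipschitz} to the two PICNN blocks, and bound the bilinear term by adding and subtracting an intermediate point; you insert $\minaction'^{\top} M\maxaction$ where the paper inserts $\minaction^\top M\maxaction'$, which is immaterial.
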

\begin{proof}
    For fixed $\state\in\statespace,~\params\in\Theta$, we have from \Cref{lemma: individual picnn lipschitz} that
    \begin{align*}
        &\big|Q_\theta(\state,\minaction,\maxaction)-Q_\theta(\state,\minaction',\maxaction')\big| \leq \underbrace{|N^1(\state,\minaction)-N^1(\state,\minaction')|}_{\leq L_N\|\minaction-\minaction'\|_2} + \underbrace{|N^2(\state,\maxaction)-N^2(\state,\maxaction')|}_{\leq L_N\|\maxaction-\maxaction'\|_2}\\
        & \qquad\qquad\qquad\qquad\qquad\qquad\qquad\qquad + |\minaction^\top M(\state)\maxaction - \minaction'^\top M(\state)\maxaction'|\\
        &\leq L_N\|\minaction-\minaction'\|_2 + L_N\|\maxaction-\maxaction'\|_2 +  |\minaction^\top M(\state)\maxaction - \minaction^\top M(\state)\maxaction' + \minaction^\top M(\state)\maxaction' - \minaction'^\top M(\state)\maxaction'|\\
        &\leq L_N\|\minaction-\minaction'\|_2 + L_N\|\maxaction-\maxaction'\|_2 + D_\actionspacemax B_M\|\minaction-\minaction'\|_2 + D_\actionspacemin B_M\|\maxaction-\maxaction'\|_2.
    \end{align*}
    This established the Lipschitz constant $L_\mathcal{F}$ of $\qfunc\in\mathcal{F}$ for the architecture specified in \cref{eq:picnn,eq: bilinear net in q arch}.
\end{proof}

\subsection{Upper Bound on the covering number of convex-concave $\qfunc$-function}\label{appendix: convering number of convex-concave q function}
\Cref{theorem: fqi bounds} depends on the covering number of the convex-concave neural network function class $\mathcal{F}$ introduced in \Cref{section: convex concave fa} in the sup norm, $N(\eta):=N(\mathcal{F},\eta,\|\cdot\|_\infty)$ for the case when $\qfunc_\params\in\mathcal{F}$ is constructed according to \cref{eq:picnn,eq: bilinear net in q arch}. In this section, we provide a bound on this covering number.
\begin{theorem}[Covering number bound]\label{theorem: covering number bound}
    Under \Cref{assumption: actions,ass:bdd}, for convex-concave networks $\qfunc_{\params}\in\mathcal{F}$ constructed according to \cref{eq:picnn,eq: bilinear net in q arch} with $\|\params\|_2\leq B_\Theta~\forall~\params\in\Theta$, we have
    \begin{align*}
        N(\mathcal{F},\eta,\|\cdot\|_\infty)\leq \left(\frac{2B_{\Theta}\Gamma}{\eta}+1\right)^{|\params^1|+|\params^2|+|\params^3|}
    \end{align*}
    with $\Gamma=\sqrt{2L^{\textnormal{PICNN}}_{\params}+L^M_\params}$, where 
    \begin{align*}
&L^{\textnormal{PICNN}}_{\params} =\left(C^2Z^2+C^2D_{\y}^2+H^2+\left(B_W^2Z^2+B_W^2D_{\y}^2\right)\left(H^2+1\right)+1\right) B_{\delta}+\left(H^2+1\right)B_{\tilde\delta}+Z^2,\\
&L^M_{\params}=D_{\actionspacemin}^2D_{\actionspacemax}^2\left(B_W\right)^{2L_M}\left(\left(H^M\right)^2+1\right)G\left(\frac{1}{B_W^2},L_M-1\right)+D^2_{\actionspacemin}D^2_{\actionspacemax}{H^M}^2\\
&B_\delta=B_w^2G\left(\rho^2, L-1\right),\\ &B_{\tilde\delta}=\sum_{\ell=0}^{L-1} \Big(\left(B_W^2(Z+ D_{\y})+B_W\right)B_w \rho^{L-\ell-1} G\left(\nicefrac{B_W}{\rho},L-1-\ell\right)\Big)^2\\
&Z = \left(D_{\y}\rho+C_Z\right)G(\rho,L-1), \quad D_{\y}=\max{D_{\actionspacemin},D_{\actionspacemax}},
    \end{align*}
    $C, C_Z, \rho$ are defined in \Cref{subsection: appendix notation for picnn}, $H$ is the constant defined in \Cref{lemma: individual picnn lipschitz}, $H^M$ is defined in \cref{eq: bound on HM} and $|\params^1|,|\params^2|$, and $|\params^3|$ are defined in \cref{eq: param count picnn,eq: param count bilinear} 
    
    Further, for $\sup_{\state}\|M(\state)\|_2\leq B_M$ used in \Cref{theorem: lipschitz action bound for F}, we have $B_M=B_WH^M$.
\end{theorem}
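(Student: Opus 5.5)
The plan is a standard parametric covering argument. First show that the map $\params\mapsto\qfunc_\params$ is Lipschitz from $(\Theta,\|\cdot\|_2)$ into $(\mathcal{F},\|\cdot\|_\infty)$, with constant $\Gamma$. Then transfer a Euclidean $\nicefrac{\eta}{\Gamma}$-cover of the ball $\{\|\params\|_2\le B_\Theta\}$ in $\real^{|\params^1|+|\params^2|+|\params^3|}$ to an $\eta$-cover of $\mathcal{F}$.

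The volumetric bound gives at most $(2B_\Theta\Gamma/\eta+1)^{|\params|}$ balls, which is exactly the claimed expression. All the work is therefore in showing that, for every $(\state,\minaction,\maxaction)$,
\[
|\qfunc_\params(\state,\minaction,\maxaction)-\qfunc_{\params'}(\state,\minaction,\maxaction)|\le\Gamma\|\params-\params'\|_2 .
\]
I would obtain this by bounding the gradient $\|\nabla_\params\qfunc_\params\|_2^2$ uniformly by $2L^{\textnormal{PICNN}}_\params+L^M_\params$ and applying the mean value theorem. This is legitimate because $\Theta$ is convex, softplus is smooth, and the reparametrization \cref{eq:reparam} is $1$-Lipschitz.

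Because the blocks $\params^1,\params^2,\params^3$ are disjoint, $\|\nabla_\params\qfunc\|_2^2=\|\nabla_{\params^1}N^1\|^2+\|\nabla_{\params^2}N^2\|^2+\|\nabla_{\params^3}(\minaction^\top M\maxaction)\|^2$. This explains the shape $2L^{\textnormal{PICNN}}+L^M$.

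\textbf{Bilinear block.} Write $\minaction^\top M\maxaction=(\maxaction\otimes\minaction)^\top W_{\text{out}}h^M_{L_M}$. The gradient with respect to $W_{\text{out}}$ is bounded by $D_\actionspacemin D_\actionspacemax\|h^M_{L_M}\|$, which produces the term $D_\actionspacemin^2D_\actionspacemax^2(H^M)^2$. Here $H^M$ bounds the hidden states of $M$ via \Cref{lem:spl}(ii), unrolled exactly as in \Cref{lemma: individual picnn lipschitz}; this also yields $B_M=B_WH^M$. For the layer-$\ell$ weights and biases, backpropagate: the sensitivity to layer $\ell$ is at most $D_\actionspacemin D_\actionspacemax B_W^{L_M-\ell}(\|h^M_\ell\|^2+1)^{1/2}$, since $\varphi'\in(0,1)$. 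Squaring and summing over $\ell$ gives the geometric sum $B_W^{2L_M}G(B_W^{-2},L_M-1)$ factor.

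\textbf{PICNN block.} This is the main obstacle, because parameters enter both through the $\bm x$-path $h_\ell$ and through the multiplicative gates $c^{(z)}_\ell,c^{(y)}_\ell$ on the convex path. I would proceed in the following order.
\begin{itemize}[leftmargin=*]
\item Bound $\|z_\ell\|\le Z$ by unrolling $\|z_{\ell+1}\|\le\rho\|z_\ell\|+D_{\y}\rho+C_Z$; this gives $Z=(D_{\y}\rho+C_Z)G(\rho,L-1)$.
\item Define the backpropagated convex-path signal $\delta_\ell=\partial N/\partial a_\ell$. It satisfies $\|\delta_\ell\|\le B_w\rho^{L-\ell}$, and $\sum_\ell\|\delta_\ell\|^2\le B_\delta=B_w^2G(\rho^2,L-1)$.
\item Express each convex-path parameter gradient as $\delta_{\ell+1}$ times its input: $(z_\ell\odot c^{(z)}_\ell)$, $(\y\odot c^{(y)}_\ell)$, $h_\ell$, or $1$. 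The gate parameters $W^{(zh)},b^{(z)},W^{(yh)},b^{(y)}$ pick up an extra factor $B_W(Z\text{ or }D_{\y})$ times $(\|h_\ell\|^2+1)^{1/2}$. Together these give the coefficient multiplying $B_\delta$ in $L^{\textnormal{PICNN}}_\params$.
\item The final output weight contributes $\|z_L\|^2\le Z^2$.
\item The state-path parameters $\tilde W_\ell,\tilde b_\ell$ require a second backward recursion $\tilde\delta_\ell$, which collects, at every later layer, the influence of $h$ through the gates and through $W^{(h)}$, each bounded by $(B_W^2(Z+D_{\y})+B_W)$ times the downstream convex-path signal. Summing the resulting geometric series in $\nicefrac{B_W}{\rho}$ gives $B_{\tilde\delta}$. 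The gradient contribution is then $(H^2+1)B_{\tilde\delta}$.
\end{itemize}
This bookkeeping of the two coupled recursions is where errors are most likely, but each step is only the triangle inequality, $\varphi'\le1$, and the spectral bounds of \Cref{ass:bdd}.
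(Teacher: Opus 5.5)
Your proposal is correct and follows the paper's proof essentially step for step. Like the paper, you bound $\|\nabla_\params\qfunc_\params\|_2^2$ uniformly by $2L^{\textnormal{PICNN}}_\params+L^M_\params$, using the convex-path and state-path backward recursions of each PICNN plus the bilinear block, and then apply the mean value theorem and the volumetric bound $(2B_\Theta\Gamma/\eta+1)^{|\params|}$. The one step to tighten is that you should cover $\Theta$ itself rather than the whole ball $\{\|\params\|_2\le B_\Theta\}$, since your gradient bound (and hence the segment argument) holds only where \Cref{ass:bdd} does, and the cover centers must give elements of $\mathcal{F}$; a maximal $\eta/\Gamma$-separated subset of $\Theta\subseteq\{\|\params\|_2\le B_\Theta\}$ does this and satisfies the same volumetric count.
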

\begin{proof} 
To bound the covering number, we need a bound on the Lipschitz constant of $\qfunc_\params\in\mathcal{F}$ with respect to the network parameters $\params$, which can in turn be found by bounding $\sup_{\state\minaction,\maxaction}\|\nabla_\params\qfunc_\params\|_2$ uniformly over all $\theta\in\Theta$. Let $\delta^i_\ell := \nicefrac{\partial N^i}{\partial z^i_\ell}$, a bound on $\|\delta^i_\ell\|$ will help us bound the gradients with respect to all network parameters. By definition, $\delta^i_L = w^i$, and for $\ell=0,\dots,L-1$, by the chain rule
\begin{align*}
    \delta^i_\ell = \left(\frac{\partial z^i_{\ell+1}}{\partial z^i_\ell}\right)^\top& \delta^i_{\ell+1}=\text{diag}(c^{(z)}_\ell)\left(W^{(z),i}_\ell\right)^\top\text{diag}(\varphi'(a^i_{\ell+1}))\delta^i_{\ell+1}\\
    \implies& \|\delta^i_\ell\|_2\leq \|c^{(z)}_\ell\|_\infty B_W \|\delta^i_{\ell+1}\|_2=CB_W\|\delta^i_{\ell+1}\|_2\\
\end{align*}
Starting from $\|\delta_L^i\|_2=\|w^i\|_2\leq B_w$, this recursion implies $\|\delta^i_\ell\|_2\leq B_w \rho^{L-\ell}$. Let us define
\begin{align*}
\sum_{\ell=0}^{L-1} B_w^2\rho^{2(L-\ell-1)}=B_w^2G\left(\rho^2, L-1\right):=B_{\delta}
\end{align*}

We will also use a bound on $\|z_\ell^i\|_2$. We have $z_\ell^0=\bm 0$, and from \Cref{eq:picnn}, $\|z_{\ell+1}^i\|_2\leq \rho\|z_\ell^i\|_2 + D_{\bm y}\rho + C_Z\implies \|z^i_\ell\|_2\leq \left(D_{\y}\rho+C_Z\right)G(\rho,\ell-1)$ implying that for all $\ell\in\{0,\dots,L\}$, $\|z^i_\ell\|_2\leq \|z^i_L\|_2\leq \left(D_{\y}\rho+C_Z\right)G(\rho,L-1)=:Z$. We will use this result to bound the gradients with respect to all parameters. Starting with the parameters on the convex path, we have for $r^{(z),i}_\ell:=W^{(zh),i}_\ell h^i_\ell + b^{(z),i}_\ell$:
\begin{align}\label{eq: deriv convex picnn bound}
\begin{aligned}
     &\nabla_{W^{(z),i}_\ell}N^i = D^i_{\ell+1}\delta^i_{\ell+1}(z^i_\ell\odot c^{(z)}_\ell)^\top\implies\|\nabla_{W^{(z),i}_\ell}N^i\|_F\leq \|\delta^i_{\ell+1}\|_2\cdot C Z,\\
    &\nabla_{W^{(y),i}_\ell}N^i = D^i_{\ell+1}\delta^i_{\ell+1}(\y\odot c^{(y)}_\ell)^\top\implies \|\nabla_{W^{(y),i}_\ell}N^i\|_F\leq \|\delta^i_{\ell+1}\|_2\cdot C D_{\y},\\
    &\nabla_{W^{(h),i}_\ell}N^i = D^i_{\ell+1}\delta^i_{\ell+1}(h^i_\ell)^\top\implies \|\nabla_{W^{(h),i}_\ell}N^i\|_F\leq \|\delta^i_{\ell+1}\|_2\cdot H,\\
    &\nabla_{W^{(zh),i}_\ell}N^i = \left(\varphi'\left(r^{(z),i}_\ell\right)\odot z^i_{\ell}\odot\left(\left(W^{(z),i}_{\ell}\right)\right)^\top D^i_{\ell+1}\delta^i_{\ell+1}\right)(h^i_\ell)^\top\\
    &\qquad\qquad\qquad\implies\|\nabla_{W^{(zh),i}_\ell}N^i\|_F\leq \|\delta^i_{\ell+1}\|_2\cdot B_WZH,\\
    &\nabla_{W^{(yh),i}_\ell}N^i = \left(\left(\left(W^{(y),i}_{\ell}\right)^\top D^i_{\ell+1}\delta^i_{\ell+1}\right)\odot  \y\right)(h^i_\ell)^\top\\
    &\qquad\qquad\qquad\implies\|\nabla_{W^{(yh),i}_\ell}N^i\|_F\leq \|\delta^i_{\ell+1}\|_2\cdot B_WD_{\y}H\\
    &\nabla_{b^{(z),i}_\ell}N^i = \varphi'\left(r^{(z),i}_\ell\right)\odot z^i_{\ell}\odot\left(\left(W^{(z),i}_{\ell}\right)^\top D^i_{\ell+1}\delta^i_{\ell+1}\right)\implies\|\nabla_{b^{(z),i}_\ell}N^i\|_2\leq \|\delta^i_{\ell+1}\|_2\cdot B_WZ,\\
    &\nabla_{b^{(y),i}_\ell}N^i = \left(\left(W^{(y),i}_{\ell}\right)^\top D^i_{\ell+1}\delta^i_{\ell+1}\right)\odot  \y\implies\|\nabla_{b^{(y),i}_\ell}N^i\|_2\leq \|\delta^i_{\ell+1}\|_2\cdot B_WD_{\y}, \text{ and}\\
    &\nabla_{b^i_\ell}N^i = D_{\ell+1}^i\delta^i_{\ell+1}\implies\|\nabla_{b^i_\ell}N^i\|_2\leq \|\delta^i_{\ell+1}\|_2,
\end{aligned}
\end{align}

Similarly, for parameters on the state path, we will need a bound on $\tilde{\delta}^i_\ell:=\nabla_{h^i_\ell}N^i$. We have $\tilde{\delta}^i_\ell = \left(\nicefrac{\partial h^i_{\ell+1}}{\partial h^i_\ell}\right)^\top \tilde{\delta}^i_{\ell+1} + \left(\nicefrac{\partial a^i_{\ell+1}}{\partial h^i_\ell}\right)^\top \left(\nicefrac{\partial N^i}{\partial a^i_{\ell+1}}\right)$.
From \Cref{subsection: appendix notation for picnn}, $\tilde{a}^i_{\ell+1}=\tilde{W}^i_\ell h^i_\ell+\tilde{b}^i_\ell$, and $\tilde{D}^i_{\ell+1}=\textrm{diag}\left(\varphi'(\tilde{a}^i_{\ell+1})\right)$. Then the first derivative above  $\nicefrac{\partial h^i_{\ell+1}}{\partial h^i_\ell}=\tilde{D}^i_{\ell+1}\tilde{W}^i_\ell$, and the second derivative $\nicefrac{\partial a^i_{\ell+1}}{\partial h^i_\ell}=W^{(z),i}_\ell\textrm{diag}\left(z^i_\ell\odot\varphi'\left(r^{(z),i}_\ell\right)\right)W^{(zh),i}_\ell + W^{(y),i}_\ell\textrm{diag}\left(\y\right)W^{(yh),i}_\ell + W^{(h),i}_\ell$, yielding
\begin{align*}
    \tilde{\delta}^i_\ell &=\left(W^{(z),i}_\ell\textrm{diag}\left(z^i_\ell\odot\varphi'\left(r^{(z),i}_\ell\right)\right)W^{(zh),i}_\ell + W^{(y),i}_\ell\textrm{diag}\left(\y\right)W^{(yh),i}_\ell + W^{(h),i}_\ell\right)^\top D_{\ell+1}\delta^i_{\ell+1}\\
    &+\left(\tilde{W}^i_\ell\right)^\top \tilde{D}^i_{\ell+1}\tilde{\delta}^i_{\ell+1}\\
    \implies& \|\tilde{\delta}^i_\ell\|_2 \leq B_W\|\tilde{\delta}^i_{\ell+1}\|_2  + \left(B_W^2(Z+D_{\y}) + B_W\right)B_w\rho^{L-\ell-1} 
\end{align*}
Unrolling this recursion with $\|\tilde\delta^i_{L-1}\|_2\leq \left(B_W^2(Z+ D_{\y})+B_W\right)B_w$ yields $\|\tilde\delta^i_{\ell}\|_2\leq \left(B_W^2(Z+ D_{\y})+B_W\right)B_w \rho^{L-\ell-1} G\left(\nicefrac{B_W}{\rho},L-1-\ell\right)$. We define
\begin{align*}
    \sum_{\ell=0}^{L-1} \left(\left(B_W^2(Z+ D_{\y})+B_W\right)B_w \rho^{L-\ell-1} G\left(\nicefrac{B_W}{\rho},L-1-\ell\right)\right)^2=:B_{\tilde\delta}
\end{align*}

Using this, we have
\begin{align*}
    &\nabla_{\tilde{W}^i_\ell}N^i = (\tilde{D}^i_{\ell+1}\tilde{\delta}^i_{\ell+1})(h^i_\ell)^\top\implies\|\nabla_{\tilde{W}^i_\ell}N^i\|_F\leq \|\tilde{\delta}^i_{\ell+1}\|_2\cdot H\\
    &\nabla_{\tilde{b}^i_\ell}N^i = (\tilde{D}^i_{\ell+1}\tilde{\delta}^i_{\ell+1})\implies\|\nabla_{\tilde{W}^i_\ell}N^i\|_F\leq \|\tilde{\delta}^i_{\ell+1}\|_2
\end{align*}

Now for each PICNN $N^i$, we have
\begin{align}\label{eq: deriv state picnn bound}
   \begin{aligned}
        \|\nabla_{\params^i}N^i\|_2^2 &= \sum_{\ell=0}^{L-1}\Big(\|\nabla_{W^{(z),i}_\ell}N^i\|_F^2+\|\nabla_{W^{(y),i}_\ell}N^i\|_F^2+\|\nabla_{W^{(h),i}_\ell}N^i\|_F^2+\|\nabla_{W^{(zh),i}_\ell}N^i\|_F^2\\
    &+\|\nabla_{W^{(yh),i}_\ell}N^i\|_F^2+\|\nabla_{b^{(z),i}_\ell}N^i\|_2^2+\|\nabla_{b^{(y),i}_\ell}N^i\|_2^2+\|\nabla_{b^{i}_\ell}N^i\|_2^2\\
    &+\|\nabla_{\tilde{W}^i_\ell}N^i\|_F^2 + \|\nabla_{\tilde{b}^i_\ell}N^i\|_2^2+\|\nabla_{w^i}N^i\|_2^2\Big).
   \end{aligned}
\end{align}
Using the bounds for each set of derivatives from \Cref{eq: deriv convex picnn bound,eq: deriv state picnn bound}, we have
\begin{align}
   \nonumber &\sup_{\params^i}\|\nabla_{\params^i}N^i\|_2^2 \leq \left(C^2Z^2+C^2D_{\y}^2+H^2+\left(B_W^2Z^2+B_W^2D_{\y}^2\right)\left(H^2+1\right)+1\right) \sum_{\ell=0}^{L-1}\|\delta^i_{\ell+1}\|_2^2\\
   \nonumber &\quad\qquad\qquad\qquad+\left(H^2+1\right)\sum_{\ell=0}^{L-1}\|\tilde{\delta}^i_{\ell+1}\|_2^2+Z^2\\
    &\leq \underbrace{\left(C^2Z^2+C^2D_{\y}^2+H^2+\left(B_W^2Z^2+B_W^2D_{\y}^2\right)\left(H^2+1\right)+1\right) B_{\delta}+\left(H^2+1\right)B_{\tilde\delta}+Z^2}_{=:L^{\text{PICNN}}_{\params}}\label{eq: param lip picnn}
\end{align}
For the standard feedforward network $M_\params^3(\state)$ defined in \cref{eq: bilinear net in q arch}, we will first bound $\|h^M_\ell\|_2$.
We have $h^M_{k+1}=\varphi(W_kh_{k}^M + b_k) \implies  \|h^M_{\ell+1}\|_2\leq B_W \|h^M_\ell\|_2 +B_b+\sqrt{p_M}\log 2$. Unrolling this recursion with $\|h^M_0\|_2=\|\state\|_2\leq D_{\state}$ yields $\|h^M_\ell\|_2\leq D_{\statespace}B_W^\ell + (B_b+\sqrt{p_M}\log2)\cdot G(B_W,\ell-1)$ for $\ell>0$. This gives that
\begin{align}
    \|h^M_\ell\|\leq \max\left(D_{\statespace}, D_{\statespace}B_W^{L_M}+ (B_b+\sqrt{p_M}\log2)\cdot G(B_W,L_M-1)\right)=:H^M \label{eq: bound on HM}
\end{align}
Note that \Cref{eq: bound on HM} also gives $\sup_{\state}\|M\|_2\leq\| W_{\text{out}}h^M_{L_M}\|_F\leq B_W H^M=:B_M$
Let $w=\text{vec}(\minaction\maxaction^\top)$, then note that $\|w\|_2\leq D_{\actionspacemin}D_{\actionspacemax}$. Let $f=\minaction^\top M\maxaction$, then   $f= \minaction^\top M \maxaction = \left(\text{vec}(\minaction\maxaction^\top)\right)^\top\text{vec}(M) = w^\top W_{\text{out}}h^M_{L_M}$.  We have
\begin{align*}
    \nabla_{W_{\text{out}}}f=w(h^M_{L_M})^\top \implies \|\nabla_{W_{\text{out}}}f\|_F\leq D_{\actionspacemin}D_{\actionspacemax}\cdot \|h^M_{L_M}\|
\end{align*}
Let $\delta^M_\ell:=\nabla_{a^M_\ell}f$, then $\delta^M_\ell=\text{diag}\left(\varphi'(a^M_\ell)\right)(W^M_\ell)^\top\delta^M_{\ell+1}$. Unrolling this recursion with $\|\delta^M_{L_M}\|_2\leq B_WD_{\actionspacemin}D_{\actionspacemax}$ yields $\|\delta^M_\ell\|_2 \leq B_W^{L_M-\ell+1}D_{\actionspacemin}D_{\actionspacemax}$. Now we have
\begin{align*}
    &\nabla_{W^M_\ell}f = \delta^M_{\ell+1}(h^M_\ell)^\top\implies\|\nabla_{W^M_\ell}f\|_F \leq B_W^{L_M-\ell}D_{\actionspacemin}D_{\actionspacemax}\|h^M_\ell\|_2\\
    &\nabla_{b^M_\ell}f = \delta^M_{\ell+1}\implies\|\nabla_{W^M_\ell}f\|_F \leq B_W^{L_M-\ell}D_{\actionspacemin}D_{\actionspacemax}
\end{align*}
Thus
\begin{align}
    \nonumber&\|\nabla_{\params^3}\minaction^\top M_{\params^3}(\state)\maxaction\|_2^2=\|\nabla_{W_{\text{out}}} f\|_F^2 + \sum_{\ell=0}^{L_M-1}\|\nabla_{W^M_\ell}f\|_F^2 + \|\nabla_{b^M_\ell}f\|_2^2\\
    \nonumber\implies\sup_{\params^3}&\|\nabla_{\params^3}\minaction^\top M_{\params^3}(\state)\maxaction\|_2^2\\
    &\leq \underbrace{D_{\actionspacemin}^2D_{\actionspacemax}^2\left(B_W\right)^{2L_M}\left(\left(H^M\right)^2+1\right)G\left(\frac{1}{B_W^2},L_M-1\right)+D^2_{\actionspacemin}D^2_{\actionspacemax}{H^M}^2}_{:=L^M_\params}.\label{eq: param lip bilinear}
\end{align}
Thus, using \cref{eq: param lip picnn,eq: param lip bilinear}, any $\qfunc_\params\in\mathcal{F}$ constructed according to the architecture specified in \cref{eq:picnn,eq: bilinear net in q arch} satisfies
\begin{align*}
    \|\sup_{\params}\nabla_{\params}\qfunc_\params\|_2 \leq   \sqrt{2L^{\text{PICNN}}_{\params} + L^M_{\params}}=\Gamma.
\end{align*}
Thus, the $\qfunc_\theta$ network in the function class $\mathcal{F}$ with the architecture specified in \cref{eq:picnn,eq: bilinear net in q arch} is $\Gamma$-Lipschitz in the parameters $\params$. Correspondingly, using the facts that the required covering number $N(\mathcal{F}, \eta, \|\cdot\|_\infty)\leq N(\Theta, \nicefrac{\eta}{\Gamma}, \|\cdot\|_2)$, and that for positive $B_\Theta$ such that $\|\params\|_2\leq B_\Theta~\forall~\params\in\Theta$,  $N(\Theta, \tilde{\eta}, \|\cdot\|_2) \leq \left(\nicefrac{2B_\Theta}{\tilde{\eta}}+1\right)^{|\params|}$ when $\Theta$ is any convex set within the $B_\Theta$-bounded $\ell^2$ ball in $R^{|\params|}$ (c.f., Chapter 4 of \citet{Vershynin_2026}), we have
$N(\mathcal{F},\eta,\|\cdot\|_\infty)\leq \left(\nicefrac{2B_{\Theta}\Gamma}{\eta}+1\right)^{|\params^1|+|\params^2|+|\params^3|}$, with $|\params^1|,|\params^2|$, and $|\params^3|$ from \cref{eq: param count picnn,eq: param count bilinear}. 
\end{proof}

\section{Experimental Details}\label{appendix: experimental details}
In this section, we provide additional implementation details for the experiments in \Cref{section: experiments}.

\paragraph{Convex-Concave $\qfunc$-function.} We used the architecture described in \cref{eq:picnn,eq: bilinear net in q arch}. Each partially convex input network has $2$ hidden layers with hidden depth $32$ for both state and convex paths. We used softplus activation for the convex path, and ReLU activation for the state path. For the bilinear network $M$, we used 1 hidden layer with width $32$, and ReLU activation.

\paragraph{Hardware.} All experiments were carried out on a desktop with an AMD Ryzen 9 7950X 16-Core Processor.

\end{document}